\documentclass[journal]{IEEEtran}
\usepackage{graphicx}
\usepackage{hyperref}
\hypersetup{colorlinks,linkcolor={blue},citecolor={blue},urlcolor={red}}
\usepackage{cite}
\usepackage{booktabs}
\usepackage{placeins}
\usepackage{graphics,subfigure}
\usepackage{amsmath,amsfonts}
\usepackage{amsthm}

\newtheorem{lemma}{Lemma}

\usepackage{stmaryrd} 

\newtheorem{theorem}{Theorem}
\usepackage{multirow}
\usepackage{array}
\usepackage{soul}
\usepackage{algorithm}
\usepackage{algorithm}
\usepackage{algpseudocode}

\usepackage{amssymb}
\usepackage{cases}
\usepackage{setspace,multirow}
\usepackage{booktabs}
\usepackage{float}
\usepackage{mathtools}
\usepackage{geometry}
\geometry{a4paper, margin=0.75in}
\usepackage{xspace}
\usepackage{nccmath}

\graphicspath{{Images/}}
\ifCLASSINFOpdf
\else
\fi
\hyphenation{op-tical net-works semi-conduc-tor}
\newcommand{\orcid}[1]{\href{https://orcid.org/#1}{\textcolor[HTML]{A6CE39}{\aiOrcid}}}

\usepackage{tikz,xcolor,hyperref}
\usepackage{cleveref}
\definecolor{lime}{HTML}{A6CE39}
\DeclareRobustCommand{\orcidicon}{%
   \begin{tikzpicture}
    \draw[lime, fill=lime] (0,0) 
    circle [radius=0.16] 
    node[white] {{\fontfamily{qag}\selectfont \tiny ID}};    \draw[white, fill=white] (-0.0625,0.095) 
    circle [radius=0.007];    \end{tikzpicture}
\hspace{-2mm}}
\foreach \x in {A, ..., Z}{%
   \expandafter\xdef\csname orcid\x\endcsname{\noexpand\href{https://orcid.org/\csname orcidauthor\x\endcsname}{\noexpand\orcidicon}}
    }

\begin{document}

\title{SecureDyn-FL: A Robust Privacy-Preserving Federated Learning Framework for Intrusion Detection in IoT Networks}

\author{Imtiaz Ali Soomro\orcidI{}, Hamood Ur Rehman\orcidH{}, Syed Jawad Hussain\orcidJ{}, Adeel Iqbal\orcidA{}, Waqas Khalid\orcidW{}, and Heejung Yu\orcidQ{}\\

\thanks{This work was supported by the National Research Foundation of Korea (NRF) grant funded by the Korea government (MSIT) (RS-2025-00514779, RS-2025-02303435), by the Information Technology Research Center (ITRC) support program (IITP-2023-RS-2022-00164800) supervised by the Institute for Information \& Communications Technology Planning \& Evaluation (IITP). \textit{(Imtiaz Ali Soomro and Adeel Iqbal contributed equally to this work.)}. (\textit{Corresponding authors: Heejung Yu and Waqas Khalid}).}

\thanks{I. A. Somroo, and S. J. Hussain are with Sir Syed CASE Institute of Technology, Islamabad, Pakistan (email: imtiaz.soomro@case.edu.pk, jawad.hussain@case.edu.pk).}
\thanks{H. Khan is with the ECE Department, Habib University, Karachi, Pakistan (email: hamood.rehman@sse.habib.edu.pk).}
\thanks{A. Iqbal is with the School of Computer Science and Engineering, Yeunganam University, South Korea (email:adeeliqbal@yu.ac.kr).}
\thanks{Waqas Khalid is with the Department of Electrical and Electronic Engineering, and the Next Generation Internet of Everything Laboratory (NGIoE Lab), University of Nottingham Ningbo China, Ningbo 315100, China (e-mail: Waqas.Khalid@nottingham.edu.cn).}

\thanks{Heejung Yu is with the Department of Electronics and Information Engineering, Korea University, Sejong, 30019, South Korea (email: heejungyu@korea.ac.kr).}


}

\maketitle

\begin{abstract}
The rapid proliferation of Internet of Things (IoT) devices across domains such as smart homes, industrial control systems, and healthcare networks has significantly expanded the attack surface for cyber threats, including botnet-driven distributed denial-of-service (DDoS), malware injection, and data exfiltration. Conventional intrusion detection systems (IDS) face critical challenges like privacy, scalability, and robustness when applied in such heterogeneous IoT environments. To address these issues, we propose SecureDyn-FL, a comprehensive and robust privacy-preserving federated learning (FL) framework tailored for intrusion detection in IoT networks. SecureDyn-FL is designed to simultaneously address multiple security dimensions in FL-based IDS: (1) poisoning detection through dynamic temporal gradient auditing, (2) privacy protection against inference and eavesdropping attacks through secure aggregation, and (3) adaptation to heterogeneous non-independent-and-identically-distributed (non-IID) data via personalized learning. The framework introduces three core contributions: (i) a dynamic temporal gradient auditing mechanism that leverages Gaussian mixture models (GMMs) and Mahalanobis distance (MD) to detect stealthy and adaptive poisoning attacks, (ii) an optimized privacy-preserving aggregation scheme based on transformed additive ElGamal encryption with adaptive pruning and quantization for secure and efficient communication, and (iii) a dual-objective personalized learning strategy that improves model adaptation under non-IID data using logit-adjusted loss. Extensive experiments on the N-BaIoT dataset under both IID and non-IID settings, including scenarios with up to 50\% adversarial clients, demonstrate that SecureDyn-FL consistently outperforms state-of-the-art FL-based IDS defenses. It achieves up to 99.01\% detection accuracy, a 98.9\% F1-score, and significantly reduced attack success rates across diverse poisoning attacks, while maintaining strong privacy guarantees and computational efficiency for resource-constrained IoT devices.
\end{abstract}

\begin{IEEEkeywords}
Security threats, Intrusion Detection System (IDS), Federated learning (FL).

\end{IEEEkeywords}

\section{Introduction}

The rapid expansion of the Internet has driven large-scale connectivity, accelerating the deployment of Internet of Things (IoT) devices \cite{AI6G_ICTE2022, What5G_S2017, ZIKRIA2018699, MurtazaIoT, WiFiCommag2021, IOTO_ACCESS18}. By 2025, IoT devices are expected to exceed 55.7 billion globally \cite{hojlo2021future}. Smart appliances and sensors produce massive data streams but have limited computing power and minimal security \cite{blockchain_IoT2023}. These vulnerabilities make them attractive targets for cyber adversaries, as illustrated by the 2016 Mirai botnet attack \cite{antonakakis2017understanding}, which remains one of the most severe IoT security incidents. Such events highlight the need for effective IoT-specific cybersecurity measures. Intrusion Detection Systems (IDS), which analyze network traffic to identify threats, are essential components for IoT defense \cite{IoTEAAI_2024}.

Federated Learning (FL) offers a promising paradigm for addressing the privacy and scalability limitations of centralized IDS \cite{mcmahan2017communication}. In FL, clients collaboratively train a shared model without exchanging raw data. Clients compute local updates that are aggregated by a central server to form a global model. This iterative process preserves privacy, reduces communication overhead, and supports distributed intrusion detection \cite{agrawal2022federated}. Recent FL-based IDS frameworks \cite{friha2022felids, kelli2021ids} leverage this approach to enhance scalability and privacy.

Despite these advantages, several deployment challenges remain. IoT environments typically exhibit non-independent-and-identically-distributed (non-IID) data, with clients receiving heterogeneous mixes of benign and malicious traffic. Such heterogeneity degrades global convergence and detection performance \cite{agrawal2022federated, ferrag2021federated, mothukuri2021survey}. FL is also vulnerable to poisoning attacks: compromised clients can inject malicious updates to bias or degrade the global model \cite{fang2020local, chang2023privacy}. The lack of client supervision amplifies these threats \cite{rey2022federated, zhang2022secfednids}. Adversaries can further exploit eavesdropping or man-in-the-middle (MITM) attacks \cite{tt3, tt4} to reconstruct local models and conduct inference attacks \cite{xu2021else, driouich2022novel, wang2019eavesdrop}. Gradient manipulation techniques can bypass traditional defenses \cite{wan2021shielding}.

Privacy-preserving techniques such as differential privacy (DP), secure multiparty computation (SMC), and homomorphic encryption (HE) have been proposed to mitigate these threats \cite{xu2022hercules, zhao2019privacy}. DP-based methods \cite{chandu2025federated, li2023efficient} inject noise to protect data but often degrade accuracy in dynamic or imbalanced settings. SMC approaches \cite{liu2024survey} prevent data leakage but incur heavy computational and communication costs, unsuitable for constrained devices. HE frameworks \cite{guo2024efficient} ensure confidentiality but increase latency. Hybrid solutions \cite{sebert2022protecting, xu2022privacy, sav2022privacy} aim to balance privacy and performance but face trade-offs. Frameworks such as TrustFL \cite{zhang2020enabling} and SafeFL \cite{gehlhar2023safefl} advance the field but have practical limitations: TrustFL depends on trusted execution environments (TEEs), and SafeFL incurs high computational overhead. Many poisoning defenses require gradient access, which risks data leakage, and existing methods struggle against encrypted poisoning attacks.

These limitations can be summarized as: (i) vulnerability to poisoning and inference attacks, (ii) poor adaptation to non-IID data and limited generalization to unseen intrusions, and (iii) communication inefficiencies unsuitable for resource-constrained devices. While recent zero-shot learning (ZSL) advances \cite{asif2024advanced} offer promising techniques for generalization, further progress is needed to address these issues holistically.

This paper proposes SecureDyn-FL, a privacy-preserving FL-based IDS tailored for heterogeneous IoT environments. SecureDyn-FL addresses three core challenges: data heterogeneity, poisoning resilience, and communication security. A model decoupling strategy separates shared feature extraction from personalized classification, allowing local adaptation while maintaining global robustness. A joint mini-batch logit-adjusted and cross-entropy loss improves learning under diverse distributions. Unlike standard aggregation methods, which are vulnerable to label-flipping and model poisoning \cite{biggio2012poisoning, bagdasaryan2020backdoor}, SecureDyn-FL maintains stable convergence under heterogeneous and adversarial conditions.


SecureDyn-FL ensures secure communication by integrating homomorphic encryption with efficient privacy-preserving mechanisms. This design enables encrypted computation without compromising computational efficiency, thereby safeguarding both model parameters and sensitive data throughout the training process. The framework is engineered for real-world deployment, delivering strong security guarantees, robustness against adversarial behavior, and scalability across large-scale distributed environments.

The main contributions are:
\begin{itemize}
\item Proposing \textbf{SecureDyn-FL}, integrating dynamic auditing, encryption, and personalization to address heterogeneity, poisoning, and communication security in FL-based IDS.
\item Developing a Gaussian mixture model (GMM)-based auditing mechanism using Mahalanobis distance (MD) to detect stealthy and adaptive poisoning attacks.
\item Designing a hybrid loss function that improves local adaptation and global performance.
\item Applying dynamic pruning and quantization to reduce communication overhead without degrading accuracy or privacy.
\item Demonstrating strong cross-dataset generalization, maintaining high performance across N-BaIoT, and $TON_{IoT}$ benchmarks.
\item Achieving up to 99.01\% overall accuracy and a 0.9893 F1-score on N-BaIoT under same-model poisoning, while maintaining strong robustness across all attack settings, consistently outperforming state-of-the-art defenses such as FL Trust, Shield FL, and FL-Defender.
\end{itemize}

The remainder of this paper is structured as follows. Section~\ref{problem_formulation} presents the problem formulation and threat model. Section~\ref{preliminaries} discusses the necessary background and related concepts. Section~\ref{sec_workflow} introduces the overall SecureDyn-FL framework and system workflow, while Section~\ref{proposed_model} details the proposed model and defense mechanisms. Section~\ref{theoretical_analysis} provides theoretical and security analysis. Section~\ref{experiments} presents the experimental setup and results. Section~\ref{comparison} offers a comparative evaluation with existing state-of-the-art methods. Section~\ref{complexity_analysis} analyzes computational efficiency and scalability. Finally, Section~\ref{conclusion} concludes the paper and outlines future research directions.


\section{PROBLEM FORMULATION}
\label{problem_formulation}

In SecureDyn-FL models, a key focus is to secure the entire FL pipeline, with a particular emphasis on the secure aggregation of model updates. Clients contribute encrypted gradients to the FL server, bolstering data privacy by safeguarding individual contributions from both external adversaries and the server itself. Secure aggregation in SecureDyn-FL is distinct from simply sending encrypted gradients; it involves combining these gradients in a manner that prevents revealing individual updates while still facilitating effective global model training. 

Given $K$ clients with a dataset $D_i$ that may follow either IID or non-IID data distributions, the goal is to collaboratively train a global model $M$ by minimizing the loss function $L$. This must be achieved while ensuring data privacy, robustness against adversarial attacks, and minimal computational overhead. The SecureDyn-FL framework $F$ is designed to meet the following objectives:
\begin{itemize}
  \item Handle both IID and non-IID data distributions across clients.
   \item Ensure data privacy by allowing clients to contribute to $M$ without exposing their individual data.
    \item Verify encrypted gradients $g_i = \nabla (D_i , M)$ to detect malicious attacks.
     \item Implement secure aggregation of verified gradients to enhance model robustness against poisoning attacks.
      \item Maintain high performance of $M$ for all clients, irrespective of data distribution variations.
       \item Minimize the overhead of encryption and gradient verification while optimizing cryptographic efficiency.
       \end{itemize}
       
       The adversarial attack problem in SecureDyn-FL is formulated as minimizing $L(M)$ under privacy, robustness, and efficiency constraints, as expressed in Eq. ~\ref{eq1}.


\begin{equation}
\min_{M} L(M) = \left( \frac{1}{K} \right) \sum_{i=1}^{N} L(D_i , M)
\label{eq1}
\end{equation}

When optimizing Eq. ~\ref{eq1}, the following constraints are considered:
\begin{itemize}
  \item The global model $M$ is updated using verified gradients $g_i$ after their computation. 
    \item Gradients $g_i = \nabla(D_i , M)$ for $i = 1, \ldots , K$ are encrypted to preserve confidentiality.
      \item Performance constraints, such as data security, model robustness, adaptability to IID and non-IID data, and cryptographic efficiency, are satisfied.  
      \end{itemize}

The gradients are verified using a predefined verification
function $Va(g_i)$, as defined in Eq. ~\ref{eq2}.

\begin{equation}
Va(g_i) =
\begin{cases}
\text{benign} & \text{if } g_i \in \text{non-malicious} \\
\text{malicious} & \text{otherwise}
\end{cases}
\label{eq2}
\end{equation}

In Eq. ~\ref{eq2} $V_a(g_i)$ denotes a binary auditing decision variable produced by the temporal gradient auditing mechanism. A value of $V_a(g_i) = 1$ indicates that the client update $g_i$ is classified as benign and is therefore accepted for aggregation, whereas $V_a(g_i) = 0$ denotes that the update is identified as malicious and consequently rejected. This explicit interpretation improves the precision and interpretability of the proposed formulation. This function evaluates $g_i$ against expected non-malicious patterns, categorizing it accordingly. Such a mechanism ensures only valid gradients contribute to updating the global model $M$, thus enhancing its accuracy and resilience in adversarial SecureDyn-FL scenarios. Based on Eq. 2, the benign gradients are passed to the global model:

\begin{equation}
M = M - \eta \times \left( \frac{1}{K} \right) \times \sum_{i=1}^{K} Va(g_i) \times g_i
\end{equation}
where $\eta$ is the learning rate. The framework $F$ also evaluates accuracy, computation time, communication cost, and resource utilization to ensure practical and efficient real-world deployment.

\begin{figure}
    \centering
    \includegraphics[width=1\linewidth]{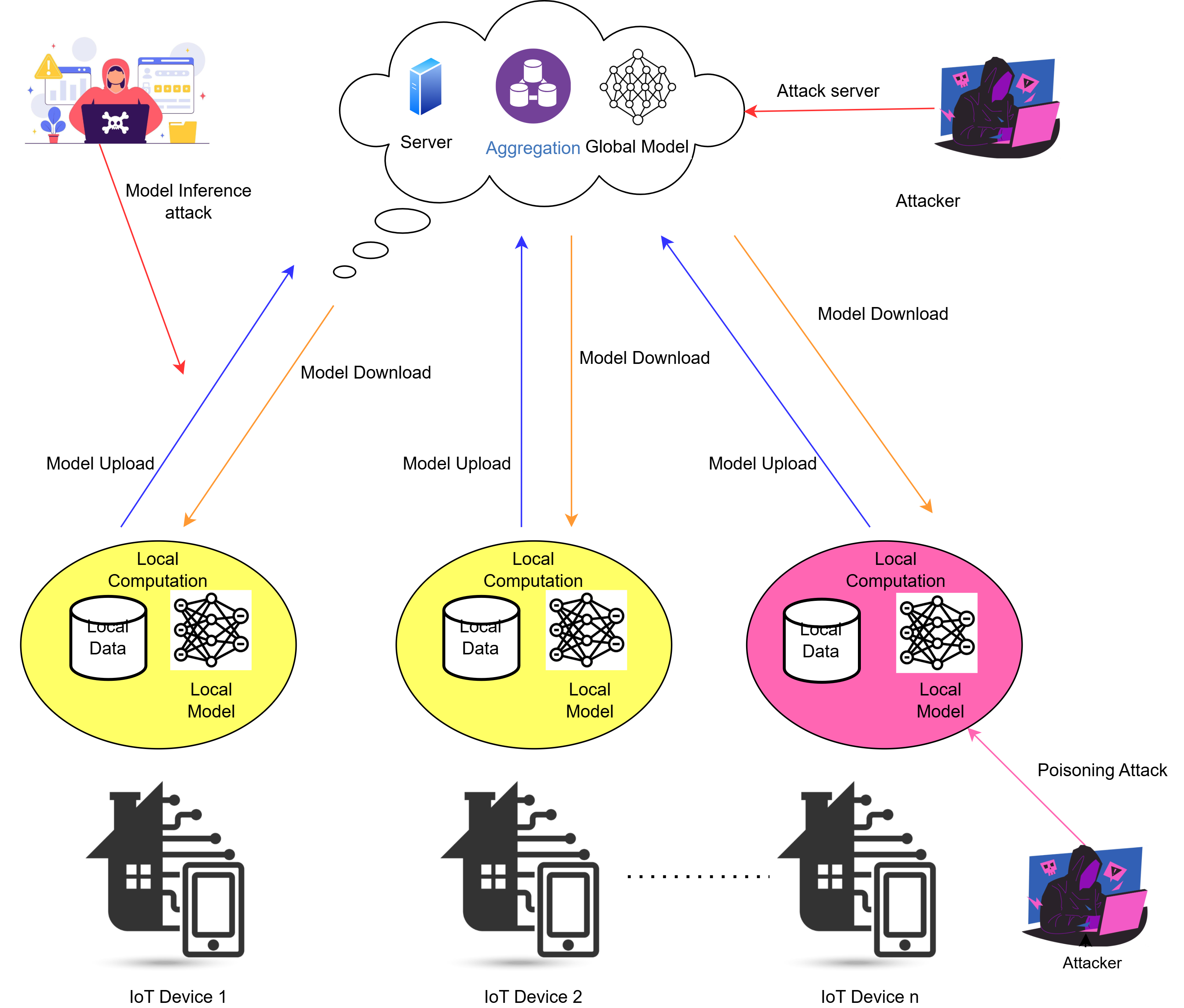}
    \caption{Thread Model: Federated Learning-based Intrusion Detection System (FL-IDS) in IoT networks with model inference and poisoning attacks.}
    \label{Thread}
\end{figure}

\subsection{Threat Model}
This paper focuses on security and privacy threats in FL that are exploitable by malicious users. Similar to \cite{xu2019verifynet}, \cite{liu2021privacy}, \cite{ma2022shieldfl}, and \cite{cao2020fltrust}, we classify users into two categories: benign or malicious, where malicious users (poisoners) may have access to diverse local datasets. Although advanced adaptive poisoning attacks exist, this study focuses on conventional model poisoning threats to maintain a consistent basis for comparison. We explicitly assume that users do not collude, excluding user-user collusion from this scope. Additionally, complex scenarios, such as server-client collusion or sophisticated adaptive attacks, are beyond the current focus. This limitation ensures a fair comparison with \cite{liu2021privacy}, \cite{ma2022shieldfl}, \cite{fang2020local}, \cite{blanchard2017machine}, and \cite{shen2016auror}, but also sets the stage for future research to extend our model to these more intricate threats, including various collusion scenarios and auditor threats. The threat model for our FL-based IDS in IoT networks, accounting for both model inference and poisoning attacks, is illustrated in Fig.~\ref{Thread}. In this setting, an honest-but-curious server may attempt to launch model inversion or related inference attacks by exploiting shared gradients to reconstruct sensitive local data, while malicious clients may perform targeted or untargeted poisoning to degrade detection performance. SecureDyn-FL mitigates these threats through additive homomorphic encryption, which prevents the server from accessing raw gradients, and temporal gradient auditing, which detects and filters abnormal client updates before aggregation.





The specific threats and corresponding goals are:

\begin{enumerate}
\item \textbf{Threat 1: The honest-but-curious} (HBC) adversary: In FL, the server has access to all local gradients and ciphertexts and may act adversarially. The system operates on the assumption of this HBC behavior, but the server could potentially launch privacy attacks, including inferring the data privacy of users. The core threat lies in adversaries seeking sensitive global model information through data reconstruction or inference attacks. 

\textbf{Goal 1:} Safeguard the confidentiality of local gradients. Adversaries, including malicious servers, can exploit shared gradients and global parameters to expose sensitive user data. Encrypting individual gradients before server transmission provides some degree of confidentiality.

\item \textbf{Threat 2: Inject poisonous gradients:} Byzantine actors can disrupt FL systems by submitting fraudulent gradients that mimic legitimate updates from heterogeneous data, compromising model integrity.

\textbf{Goal 2:} Enhance the examination of encrypted gradients to distinguish benign from malicious updates, strengthening resilience against poisoning attacks.
\end{enumerate}

\subsection{Design Goals}
SecureDyn-FL aims to achieve the following design goals to ensure high accuracy, robustness, privacy, and efficiency, even under adversarial conditions:

\begin{enumerate}
\item \textbf{Accuracy:} Maintain high classification accuracy across all clients, despite data imbalance, distribution skew, or adversarial attacks. A dual-objective loss function ensures dependable performance in both IID and non-IID settings.

\item \textbf{Robustness:} Ensure resilience against targeted and untargeted poisoning attacks, including adversaries who change their strategies over time. This is achieved through a dynamic temporal gradient auditing mechanism that tracks update behavior using GMM clustering and MD.

\item \textbf{Privacy:}
Sensitive data is protected during training and communication using additive homomorphic encryption based on a modified ElGamal scheme, enabling secure gradient aggregation without exposing raw updates.

\item \textbf{Adaptability:} Support client-level personalization via logit-adjusted loss, allowing local models to adapt to diverse class distributions while contributing to global learning, vital for heterogeneous IoT environments.

\item \textbf{Efficiency in Communication and Computation:}
Incorporate adaptive quantization and dynamic unstructured pruning to reduce overhead in low-power, bandwidth-constrained devices while preserving model quality.
\end{enumerate}

\section{Literature Review}
\label{preliminaries}
The proliferation of IoT devices has generated massive amounts of distributed data, necessitating learning frameworks that preserve privacy while enabling effective model training. FL has emerged as a promising decentralized paradigm, allowing multiple edge devices to collaboratively train a global model by exchanging only model parameters rather than raw data. This approach significantly mitigates privacy risks and communication overhead compared to centralized machine learning models. Despite these advantages, FL remains highly vulnerable to model poisoning and backdoor attacks, wherein malicious clients manipulate local gradients to degrade or subvert the global model. Fang et al. \cite{fang2020local} conducted the first systematic study of local model poisoning against Byzantine-robust FL methods. Their findings revealed that existing defenses, which were assumed to be robust against Byzantine failures, can be substantially compromised across multiple real-world datasets, thereby exposing critical vulnerabilities in federated optimization.

In response, researchers have proposed several defense strategies to enhance the robustness of FL. Jebreel and Domingo-Ferrer \cite{jebreel2023fl} introduced FL-Defender, which identifies attack-related neurons by analyzing last-layer gradient behaviors and re-weights client updates based on worker-wise angle similarity with PCA compression. Similarly, Gill et al. \cite{gill2023feddefender}  proposed FedDefender, leveraging differential testing on synthetic inputs and neuron activation fingerprinting to identify backdoor-infected clients, achieving attack success rates as low as 10\%. Erbil and Gursoy \cite{erbil2022defending}  explored a clustering-based defense using X-Means to isolate malicious updates by selectively extracting indicative DNN parameters, yielding up to 95\% true positive rates. To address large-scale attacks, Zhang et al. \cite{zhang2022fldetector} developed FLDetector, which detects malicious clients by analyzing model-update inconsistencies across iterations. Malicious participants are identified by their persistent deviation from expected update patterns, allowing Byzantine-robust methods to effectively train accurate models after removing compromised clients. Collectively, these methods reflect a growing body of work aimed at enhancing the security of federated learning in distributed IoT environments.

IDSs play a crucial role in safeguarding IoT and industrial IoT (IIoT) infrastructures against cyber threats. Traditional centralized IDS approaches face significant challenges related to data privacy, communication overhead, and real-time processing in distributed networks. FL-IDS has therefore emerged as an effective solution, combining collaborative model training with privacy preservation. Bhavsar et al. \cite{bhavsar2024fl} developed an FL-IDS using logistic regression and CNN classifiers, achieving 94–99\% accuracy on NSL-KDD and Car-Hacking datasets when deployed on low-power embedded devices such as Raspberry Pi. Akinie et al. \cite{akinie2025fine} proposed a hybrid server–edge framework that reduced memory consumption by 42\% and training time by 75\%, while maintaining 99.2\% detection accuracy. Javeed et al. \cite{javeed2024federated} combined CNN and BiLSTM architectures within a zero-trust FL model to capture spatial–temporal features, demonstrating strong performance on CICIDS2017 and Edge-IIoTset datasets. Rashid et al. \cite{rashid2023federated} further achieved 92.49\% accuracy on Edge-IIoTset, closely approaching the performance of centralized machine learning (93.92\%), thus validating the effectiveness of FL for intrusion detection without compromising privacy.

In IIoT environments, Ruzafa-Alcazar et al. \cite{ruzafa2021intrusion} conducted a comprehensive evaluation of differential privacy techniques applied to FL-IDS, comparing FedAvg and Fed+ aggregation methods on the $TON_{IoT}$ dataset under non-IID data distributions. Sun et al. \cite{hamdi2023federated} advanced this line of work by combining CNNs and Gated Recurrent Units with Isolation Trees for anomaly detection, improving both real-time performance and accuracy in non-IID settings. Similarly, Azeez et al. \cite{azeez2024federated} applied federated averaging on CICIDS2017, achieving 95.2\% accuracy, while Mahmud et al. \cite{mahmud2024privacy} demonstrated over 90\% accuracy across multiple attack types, including DoS, DDoS, and ransomware, in IoT networks. Zakaria Abou El Houda et al. \cite{abou2023secure} proposed an innovative approach that integrates secure aggregation protocols with blockchain technology to ensure both data integrity and privacy. Their system employs multi-party computation to prevent data exposure between participants and uses blockchain to provide tamper-resistance, achieving high detection accuracy on real-world IoT datasets. Collectively, these studies demonstrate that FL-IDS offers a scalable, privacy-preserving, and accurate intrusion detection mechanism for resource-constrained IoT environments.

Although FL inherently improves data privacy by avoiding raw data transmission, model updates themselves can leak sensitive information. Moreover, defense mechanisms against poisoning attacks must be designed carefully to avoid privacy compromises. As a result, recent research has focused on integrating advanced cryptographic techniques into federated learning to achieve both security and privacy. Yazdinejad et al. \cite{yazdinejad2024robust} proposed an internal auditing mechanism utilizing GMM and MD with additive homomorphic encryption (AHE) to detect malicious encrypted gradients while minimizing computational overhead. Miao et al. \cite{miao2024rfed} introduced RFed, a dual-server framework that employs scaled dot-product attention to achieve over 96\% poisoning attack failure rates without relying on strong assumptions, thereby improving scalability and robustness. Wu et al. \cite{wu2025privacy} developed PBFL, integrating two-trapdoor fully homomorphic encryption with secure normalization and cosine similarity methods to defend against poisoning while preventing privacy leakage during detection. Similarly, Ma et al. \cite{ma2022shieldfl} proposed ShieldFL, which uses two-trapdoor homomorphic encryption and secure cosine similarity measurements. ShieldFL achieved 30\%–80\% accuracy improvements against state-of-the-art poisoning attacks under both IID and non-IID data conditions. These advances illustrate a growing emphasis on privacy-preserving federated defense mechanisms, combining cryptographic primitives with statistical detection techniques to enable secure and practical FL deployment in large-scale IoT and IIoT networks.

Privacy preservation in federated learning has been extensively explored through cryptographic and statistical techniques, including secure multiparty computation (SMC), fully homomorphic encryption (FHE), differential privacy (DP), and more recently, blockchain-based mechanisms. SMC and FHE provide strong security guarantees by enabling computations on encrypted data without exposing individual client updates. However, these approaches typically incur significant computational and communication overhead, limiting their scalability in resource-constrained IoT environments \cite{xu2022hercules, zhao2019privacy}. DP techniques introduce calibrated noise to model updates, effectively protecting individual data privacy but often at the cost of reduced model utility and convergence performance, particularly in non-IID settings \cite{sebert2022protecting, sav2022privacy}. To balance privacy and efficiency, hybrid approaches have emerged that combine DP with SMC or homomorphic encryption, aiming to reduce overhead while maintaining strong privacy guarantees. In parallel, blockchain-based frameworks have been proposed to improve the integrity and auditability of the aggregation process. By leveraging immutable ledgers and decentralized consensus, blockchain can prevent tampering and ensure trustworthy model updates without relying on a fully trusted central server \cite{sav2020poseidon, gehlhar2023safefl, salam2023efficient}. While these methods offer promising privacy protection, most existing solutions struggle to simultaneously ensure lightweight operation, robust poisoning defense, and high detection accuracy in heterogeneous IoT deployments.

Existing federated learning-based IDS frameworks face notable limitations in jointly addressing data heterogeneity, adaptive poisoning resilience, and communication privacy in realistic IoT environments. Current defenses often rely on static detection mechanisms, trusted hardware, or heavy cryptographic schemes, which either fail against stealthy attacks or introduce prohibitive overhead in resource-constrained settings. Moreover, non-IID data distributions significantly hinder global model convergence and detection accuracy, a challenge insufficiently addressed by prior works. To fill these gaps, this paper proposes SecureDyn-FL, a federated intrusion detection framework that integrates dynamic temporal gradient auditing, lightweight privacy-preserving encryption, and personalized learning strategies. This holistic approach simultaneously strengthens resilience against adaptive poisoning, enhances privacy protection, and improves detection performance under non-IID conditions, thereby advancing the state of the art in secure and efficient FL-based intrusion detection for IoT networks.

\section{WorkFlow}\label{sec_workflow}

The SecureDyn-FL framework follows a structured sequence of coordinated phases that together form an end-to-end pipeline for secure, personalized, and robust intrusion detection in federated IoT environments. Rather than treating each component in isolation, the workflow integrates personalized learning, communication efficiency, and security mechanisms into a cohesive process. Client registration establishes trust and tracking, personalized training ensures adaptability to non-IID data, pruning and quantization reduce communication cost, encryption protects updates during transmission, and temporal gradient auditing detects poisoning attacks before aggregation. This high-level structure illustrates how each stage contributes directly to the overarching IDS goal: detecting malicious behavior while preserving data privacy and efficiency in federated settings.

The proposed SecureDyn-FL framework operates through a sequence of coordinated phases to ensure secure, efficient, and personalized FL for intrusion detection in IoT environments as shown in Fig.~\ref {fig_workflow}. The system initializes the federated infrastructure by assigning unique identifiers and cryptographic key pairs to each participating client. Simultaneously, a Central Audit (CA) module prepares a tagging and update tracking mechanism for later-stage gradient verification and poisoning detection.

Each client performs personalized local training using its non-IID data. The local model is decoupled into a shared feature extractor and a private classifier. A dual-loss strategy, comprising cross-entropy and mini-batch logit adjustment, is used to enhance robustness against class imbalance and heterogeneous distributions. Following local training, clients apply soft, unstructured L1-norm-based pruning with a dynamically increasing pruning rate over rounds to reduce computational and communication overhead while improving privacy. Subsequently, clients quantize the pruned updates using an adaptive quantization method. This involves computing a scale and zero-point to map real-valued updates into a lower bit-width representation, reducing bandwidth usage without sacrificing accuracy. These quantized updates are then encrypted using the CKKS homomorphic encryption scheme, enabling secure aggregation at the server without decryption.

Before aggregation, a CA is conducted. The first phase validates client identities and updates tags. The second phase involves clustering using incremental GMMs, evaluating MD, and verifying temporal trajectory consistency of updates. A dynamic multi-threshold decision mechanism classifies client updates as accepted, down-weighted, or rejected. Only verified and reliable gradients proceed to the aggregation step. The server aggregates the validated updates and constructs a global model, which is redistributed to clients for the next training round. Clients integrate the updated model while continuing to refine local classifiers. This workflow collectively ensures secure communication, personalized learning, defense against poisoning attacks, and robustness in heterogeneous IoT environments.

\begin{figure*}
    \centering
    \includegraphics[width=6.5in,height=3.6in]
{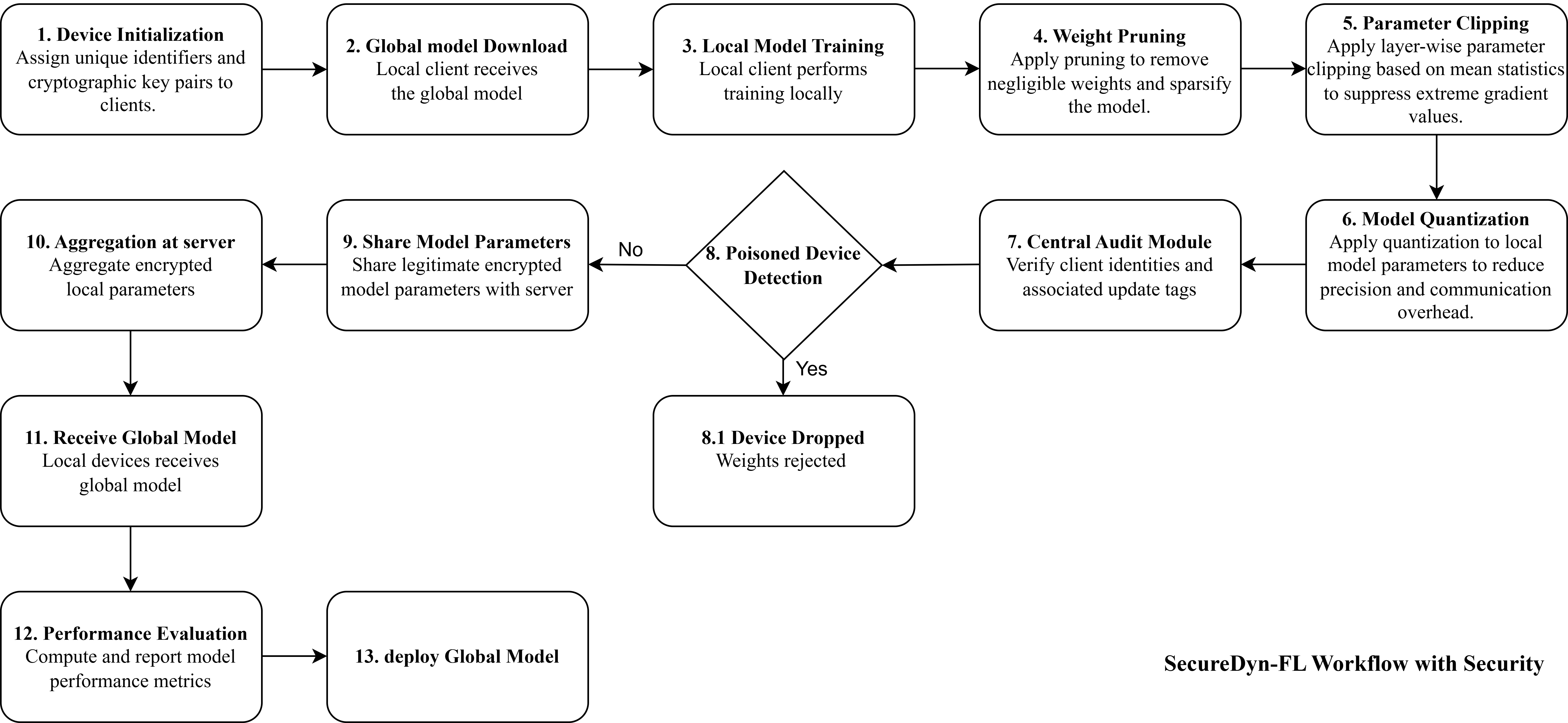}
    \caption{Flowchart of the SecureDyn-FL framework, including device initialization, local model training on clients, security measures (e.g., poisoned device detection and central auditing), optimization (via weight pruning and quantization), privacy-preserving aggregation at the server, and performance evaluation before deploying the global model.}
    \label{fig_workflow}
\end{figure*}

\section{Proposed Model}
\label{proposed_model}

This section presents proposed SecureDyn-FL, a novel personalized FL framework tailored for intrusion detection in non-IID IoT environments. To address the vulnerabilities of FL against poisoning attacks, we introduce a server-side poisoned client detection mechanism before the global model aggregation step of FL.

\subsection{System Architecture Overview}
The architecture consists of four key entities:
\begin{itemize}
    \item Clients (IoT Devices): Distributed nodes that locally train on non-IID data while preserving data privacy.
    \item Central Auditor: A trusted party responsible for client registration, key distribution, secure aggregation, model auditing, and dissemination.
    \item Communication Network: Secure channels through which encrypted updates and models are exchanged.
    \item Audit Table Repository: A secure, encrypted ledger storing client-specific encrypted updates and behavioral metadata for auditing.
\end{itemize}

The architecture supports secure model updates, client personalization, encryption-enabled communication, and integrity verification, without revealing sensitive client information or raw data.

\begin{figure*}
    \centering
        \includegraphics[width=7in,height=6.5in]{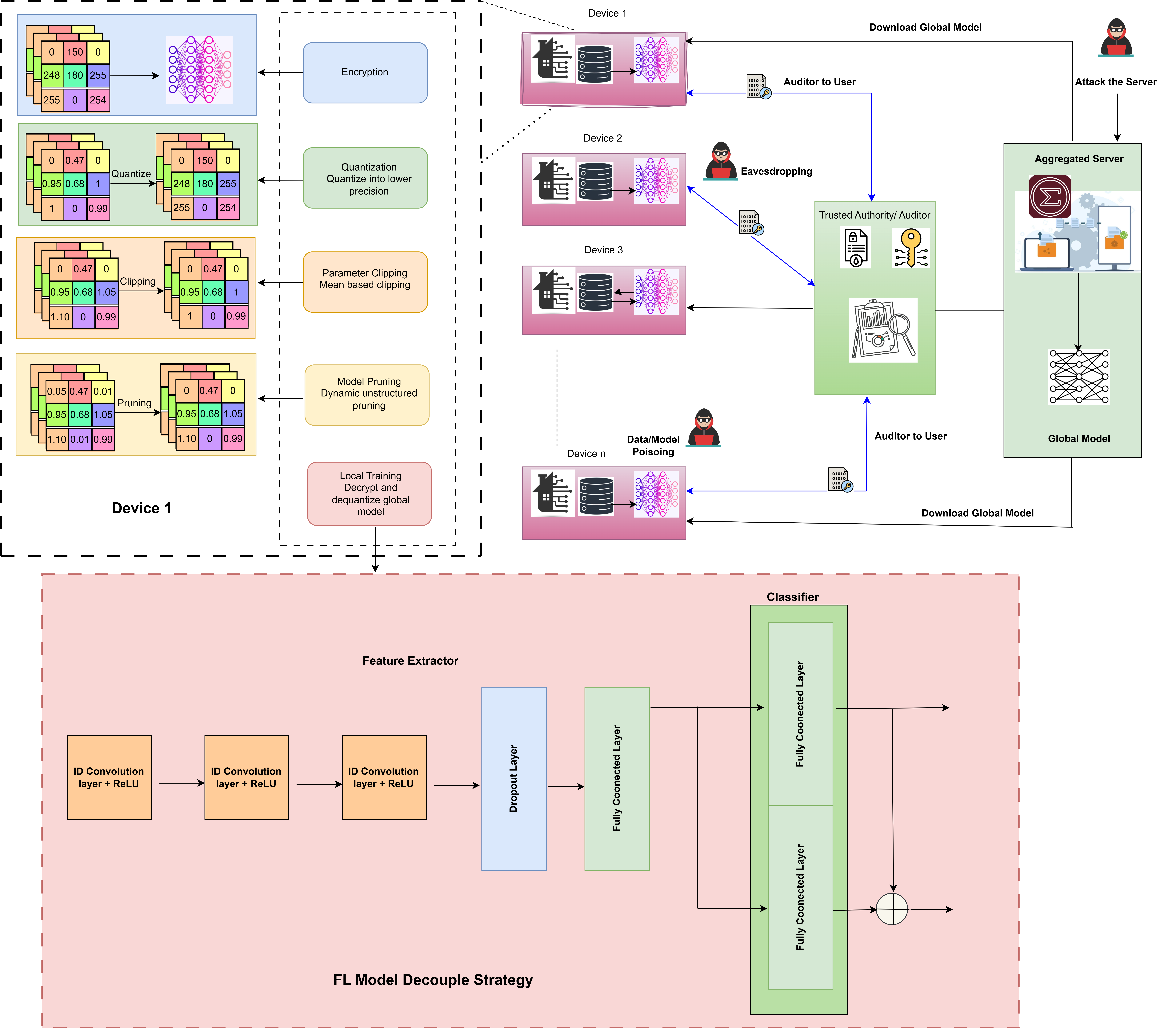}
    \caption{System model of proposed SecureDyn-FL framework. The architecture integrates multiple defense mechanisms, including quantization, mean-based parameter clipping, and dynamic unstructured model pruning, followed by encrypted local training. The FL model is decoupled into a shared feature extractor and dual classifiers (global and personalized) enabling multi-objective optimization. A trusted auditor monitors the training pipeline to detect adversarial threats such as data/model poisoning and eavesdropping. Secure aggregation ensures the integrity and privacy of the global model.}
    \label{arch}
\end{figure*}

\subsection{Registration Phase}
\label{R.P}

The device registration and authentication process comprises four main phases: registration, login, key exchange, and authentication. During registration, each device is enrolled with the certification authority (CA), which assigns a unique tag identity (TID) and generates a secret value (HSV) based on the device’s MAC address. A hashed identity (HID) is derived and stored securely to prevent duplicate registrations \cite{ashraf2023lightweight, soomro2024lightweight}. In the login phase, the device submits its HID for server validation, followed by the exchange of masked random values to reconstruct shared secrets. In the key exchange phase, both parties derive a 128-bit session key through random number operations and bitwise transformations, ensuring confidentiality. Finally, mutual authentication is achieved using HMAC-SHA256 over session keys, IP addresses, and nonces, securing the communication channel against MITM attacks. For a detailed algorithm,  mathematical derivations, and message exchanges, the reader is referred to \cite{ashraf2023robust}.

\subsection{Proposed Method: Multi-Objective Personalized Federated Learning}

To tackle the challenges introduced by statistical heterogeneity (i.e., non-IID data distributions) in FL, we propose a multi-objective personalized FL (MO-PFL) framework. This approach adopts a decoupled dual-classifier strategy coupled with a tailored multi-objective loss function. Our architecture enables client-level personalization while preserving alignment with the global learning objective. As shown in Fig.~\ref{arch}, the model consists of a \textit{shared feature extractor} and two decoupled classifiers: a \textit{personalized classifier} tuned to the local distribution and a \textit{global classifier} synchronized across clients.

\subsection{Model Architecture}

Consider a client $k \in \{1, 2, \dots, K\}$ with local dataset $D_k = \{(\mathbf{x}_i, y_i)\}_{i=1}^{n_k}$ sampled from a possibly non-IID distribution $\mathcal{D}_k$. The local model comprises:

\begin{itemize}
    \item A shared feature extractor: $f: \mathbb{R}^d \rightarrow \mathbb{R}^p$
    \item A personalized classifier: $h^{\text{pers}}_k: \mathbb{R}^p \rightarrow \mathbb{R}^{|C|}$
    \item A global classifier: $h^{\text{glob}}_k: \mathbb{R}^p \rightarrow \mathbb{R}^{|C|}$
\end{itemize}

For input $\mathbf{x}$ on client $k$, the model produces:

\begin{equation}
\hat{y}^{\text{pers}} = h^{\text{pers}}_k(f(\mathbf{x})), \quad \hat{y}^{\text{glob}} = h^{\text{glob}}_k(f(\mathbf{x}))
\end{equation}

The extractor $f$ is globally shared and updated via the \texttt{FedAvg} algorithm~\cite{mcmahan2017communication}, while $h^{\text{pers}}_k$ remains private to each client for local adaptation. The global classifier $h^{\text{glob}}_k$ is periodically synchronized across clients.

\subsection{Multi-Objective Loss Formulation}

To balance personalization and global generalization, we define a composite loss function:

\paragraph{Global Cross-Entropy Loss}
This loss ensures consistency across clients by training the global classifier:

\begin{equation}
\mathcal{L}_{\text{CE}}(y, \hat{y}^{\text{glob}}) = -\log \left( \frac{\exp(\hat{y}^{\text{glob}}_y)}{\sum_{y' \in C} \exp(\hat{y}^{\text{glob}}_{y'})} \right)
\end{equation}
where $C$ is the label space and $\hat{y}^{\text{glob}}_{y'}$ is the output logits for class $y'$. $y^{\text{glob}}$ as the predicted label generated by the global classifier on each client. This value is utilized in the computation of the global cross-entropy loss to ensure alignment between local and global objectives during federated training.
\paragraph{Logit-Adjusted Personalization Loss}
To handle the local objective of the client and learn the personalized model for each client, we apply a logit-adjusted cross-entropy loss to the personalized classifier:

\begin{equation}
\mathcal{L}_{\text{LA}}(y, \hat{y}^{\text{pers}}) = -\log \left( \frac{\exp(\hat{y}^{\text{pers}}_y + \tau \log \alpha_y^k)}{\sum_{y' \in C} \exp(\hat{y}^{\text{pers}}_{y'} + \tau \log \alpha_{y'}^k)} \right)
\end{equation}
where $\alpha_y^k$ is the normalized frequency of class $y$ in client $k$. It is dynamically recalculated at the mini-batch level, enabling real-time adaptation to batch-wise label skew. The temperature parameter $\tau$ regulates the intensity of the logit adjustment.

\paragraph{Mitigating Non-IID Challenges.}
Beyond addressing class imbalance, the proposed dual-objective formulation plays a crucial role in mitigating the adverse effects of \textit{non-IID data distributions} commonly observed in IoT environments. The global cross-entropy loss $\mathcal{L}_{\text{CE}}$ enforces alignment with the global decision boundary by optimizing shared representations collaboratively across clients, ensuring that local models do not drift too far from the global objective. Meanwhile, the logit-adjusted personalization loss $\mathcal{L}_{\text{LA}}$ dynamically reweights logits based on local class priors, which counteracts the bias introduced by label distribution skew and stabilizes local training. By jointly optimizing these objectives, the model achieves a balance between \textit{local specialization} and \textit{global consistency}, reducing client-drift and improving the robustness of aggregation under heterogeneous non-IID settings.

\subsection{Optimization and Gradient Flow}

The total loss for each client is:

\begin{equation}
\mathcal{L}_{\text{total}} = \underbrace{\mathcal{L}_{\text{CE}}(y, \hat{y}^{\text{glob}})}_{\text{Global objective}} + \lambda \underbrace{\mathcal{L}_{\text{LA}}(y, \hat{y}^{\text{pers}})}_{\text{Personalization objective}}
\end{equation}

\paragraph{Backward Pass}
\begin{itemize}
    \item $h^{\text{pers}}_k$ is updated using $\nabla \mathcal{L}_{\text{LA}}$
    \item $h^{\text{glob}}_k$ is updated using $\nabla \mathcal{L}_{\text{CE}}$
    \item $f$ is updated using $\nabla_f (\mathcal{L}_{\text{CE}} + \mathcal{L}_{\text{LA}})$
\end{itemize}

\subsection{Federated Training Protocol}

Each local training round on client $k$ proceeds through the following steps:

\begin{enumerate}
    \item \textbf{Forward Pass:} \\
    The client computes intermediate features using the shared feature extractor: 
\begin{equation}
    \mathbf{z} = f(\mathbf{x})
    \end{equation}
    These features are passed to both classifiers:
 \begin{equation}
    \hat{y}^{\text{pers}} = h^{\text{pers}}_k(\mathbf{z}), \quad \hat{y}^{\text{glob}} = h^{\text{glob}}_k(\mathbf{z})
 \end{equation}
    Here, $\hat{y}^{\text{pers}}$ reflects the personalized inference tailored to local data, while $\hat{y}^{\text{glob}}$ contributes to the globally shared model.

    \item \textbf{Loss Computation:} \\
    The total loss is formulated as a weighted sum of the global cross-entropy loss and the logit-adjusted personalization loss:
 \begin{equation}
    \mathcal{L}_{\text{total}} = \mathcal{L}_{\text{CE}}(y, \hat{y}^{\text{glob}}) + \lambda \mathcal{L}_{\text{LA}}(y, \hat{y}^{\text{pers}})
  \end{equation}
    The parameter $\lambda$ balances the trade-off between global consistency and local adaptation.

    \item \textbf{Backpropagation:} \\
    Gradients are computed and applied separately:
     \begin{align}
    h^{\text{pers}}_k &\leftarrow h^{\text{pers}}_k - \eta \nabla_{h^{\text{pers}}_k} \mathcal{L}_{\text{LA}} \nonumber \\
    h^{\text{glob}}_k &\leftarrow h^{\text{glob}}_k - \eta \nabla_{h^{\text{glob}}_k} \mathcal{L}_{\text{CE}} \nonumber \\
    f &\leftarrow f - \eta \nabla_f \left( \mathcal{L}_{\text{CE}} + \mathcal{L}_{\text{LA}} \right)
\end{align}

    The feature extractor $f$ receives a combined gradient from both loss terms, enabling it to capture features that support both global generalization and local personalization.
\end{enumerate}

\subsection{Quantization and Pruning}

During local model training, we employed a pruning technique to iteratively remove less important weights or gradients from model updates. Specifically, clients perform soft unstructured pruning based on the L1 norm, which creates a sparse model and makes the FL training process more efficient. The pruning process is guided by a dynamically updated pruning rate $p_t$, which increases over the communication rounds, allowing more aggressive pruning as the training progresses. After pruning, clients send their pruned updates to the server, which aggregates them using FedAvg to generate the global model. This pruning technique not only reduces the model size and computational costs, but also makes the training process more resistant to inference attacks.

By progressively increasing the pruning rate, communication efficiency improves throughout the rounds. As clients share a sparsified model with the server, the transmitted model is no longer the full model, limiting the information available to potential attackers. The sparsity introduced by pruning constrains the parameter space, significantly reducing the chances of reverse engineering or inferring sensitive data. This reduction in exposed parameters inherently enhances privacy protection, making it more difficult for adversaries to extract meaningful insights about the underlying data.

The pruning rate $p_t$ is updated iteratively using Eq.~\ref{eq:pruning_rate}:
\begin{equation}
p_t = \max\left(0, \frac{t - t_{\text{eff}}}{t_{\text{target}} - t_{\text{eff}}}\right) \cdot (p_{\text{target}} - p_0) + p_0
\label{eq:pruning_rate}
\end{equation}
where $p_t$ is the pruning rate at round $t$, $t_{\text{eff}}$ is the effective round when pruning starts, $t_{\text{target}}$ is the target round when the target pruning rate is reached, $p_0$ is the initial pruning rate, and $p_{\text{target}}$ is the target pruning rate. This pruning rate increases gradually from the initial value to the target value, ensuring that the pruning is progressively applied more aggressively as training progresses.

After applying pruning to the model updates at each client, the pruned local update $\Delta w^{t+1}_{p,i}$ is computed as:
\begin{equation}
\Delta w^{t+1}_{p,i} = \Delta w^{t+1}_{i} \odot m^t_i
\label{eq:pruned_update}
\end{equation}
where $\odot$ represents the element-wise product, and $m^t_i$ is the local pruning mask identifying pruned weights at communication round  $t$. This pruned update $\Delta w^{t+1}_{p,i}$ is then quantized and sent to the server for aggregation.

We also used a layer-wise clipping technique based on dynamic mean to help reduce inconsistencies during the training process. The clipping factor controls the clipping parameter, dynamically adjusting the clipping based on layer-wise updates, rather than using a static clipping method. This approach ensures that each layer’s updates are clipped according to their specific dynamics, leading to more stable and efficient training. After the local model updates are computed, each client clips its own model update $\Delta w^{t+1}_i$ to avoid instability before sending it to the server. The clipping for client $i$’s model update is applied using Eq.~\ref{eq:clipping}:
\begin{equation}
\Delta w^{t+1}_{C,i} = \text{clip}(\Delta w^{t+1}_i, -\alpha \cdot \mu_i, \alpha \cdot \mu_i)
\label{eq:clipping}
\end{equation}
where $\mu_i$ is the mean of the absolute values of the elements of client $i$’s model update, calculated as:
\begin{equation}
\mu_i = \frac{1}{n} \sum_{j=1}^{n} \left| \Delta w^{t+1}_{i,j} \right|
\label{eq:mean_abs}
\end{equation}

The clipping function $\text{clip}(\Delta w^{t+1}_i, a, b)$ ensures that the values of $\Delta w^{t+1}_i$ are constrained within the range $[-\alpha \cdot \mu_i, \alpha \cdot \mu_i]$, thereby limiting the impact of extreme values.

Next, each client performs adaptive quantization (AQ), a novel technique designed to address the communication overhead challenges in FL for resource-constrained IoT environments. AQ scheme allows devices to dynamically adjust their quantization levels based on their available communication resources. Unlike traditional quantization methods that enforce a uniform strategy across all devices, AQ accounts for device heterogeneity, enabling efficient on-device training while maintaining model accuracy.

The AQ scheme employs a \(K\)-level quantizer, where \(K\) represents the number of distinct values to which each weight can be mapped. For each device \(i \in \mathcal{N}\), the model updates \(x_i\) are quantized to reduce their size before transmission. We assume that the elements of \(x_i\) fall within the range \([r_1, r_2]\).

The quantization process is defined as follows:
   - The range \([r_1, r_2]\) is partitioned into \(N\) contiguous intervals with equal probability.
   - The threshold value separating the \(l\)-th and \((l+1)\)-th intervals is given by Eq. \ref{equ1h}.
     
     \begin{equation}
     \label{equ1h}
         T(l) = r_1 + l \cdot \Delta_{N}
     \end{equation}
    where \(\Delta_{N}\) is the quantization interval, calculated using Eq. \ref{equ1i}. 
     \begin{equation}
     \label{equ1i}
         \Delta_{N} = \frac{r_2 - r_1}{N - 1}
     \end{equation}
    
Each weight \(x_i(k)\) is mapped to a discrete value \(\bar{x}_i(k)\) using the following stochastic quantization rule:
     \[
     Q_{N}(x_i(N)) =
     \begin{cases}  
       T(l + 1) & \text{with probability} \frac{x_i(N) - T(l)}{T(l+1) - T(l)}, \\
       T(l) & \text{otherwise}.
     \end{cases}
     \]
     
This rule assigns \(x_i(N)\) to the quantization level \(T(l+1)\) with a probability proportional to its distance from \(T(l)\), ensuring minimal distortion.
The quantized output \(\bar{x}_i(N)\) takes a discrete value from the set as shown in Eq. \ref{equ1j}.
    
    \begin{equation}
    \label{equ1j}
        \{ r_1, r_1 + \Delta_{N}, r_1 + 2\Delta_{N}, \ldots, r_2 - \Delta_{N}, r_2 \}
    \end{equation}
    
The key novelty of AQ lies in its ability to dynamically adjust quantization levels based on the communication resources of each device. Devices with limited bandwidth can use fewer quantization levels (smaller \(N\)), while devices with higher bandwidth can use more levels (larger \(N\)). This adaptive approach ensures that the communication overhead is minimized without compromising model accuracy. AQ is scalable to large IoT networks with diverse device capabilities, making it suitable for real-world deployments.




\subsection{Additive ElGamal encryption based on discrete logarithm}

After completing the quantization process, each client encrypts the quantized model updates using the PRKG algorithm. However, to facilitate secure aggregation in the FL environment, it is essential to endow the ElGamal cryptosystem with additive homomorphic properties. By default, the ElGamal cryptosystem supports multiplicative homomorphism, that is,
\begin{equation}
\text{Enc}(m_1) \cdot \text{Enc}(m_2) = \text{Enc}(m_1 \cdot m_1),
\end{equation}
but it does not naturally support additive operations. To achieve additive homomorphism necessary for FL model aggregation, Zhu et al.~\cite{zhu2021distributed} introduced a secure transformation technique based on the Cramer transformation, enabling additive homomorphism over ElGamal-encrypted data.

The transformation works by converting the original plaintext \( m \) to an exponentiated form \( m' = g^m \mod p \), thereby mapping the message space into \( \mathbb{Z}_p \). The transformed ElGamal scheme then facilitates additive homomorphism as demonstrated in Eq.~\eqref{eq:add_hom}:

\begin{align}
\text{Enc}(m_1) \cdot \text{Enc}(m_2) &= g^{m_1} y^{r_1} \cdot g^{m_2} y^{r_2} \notag \\
&= g^{m_1 + m_2} y^{r_1 + r_2} \mod p
\label{eq:add_hom}
\end{align}

Although this transformation enables additive homomorphism, it significantly increases the computational burden of encryption and decryption, primarily due to the necessity of discrete logarithm recovery. Zhu et al. employed brute-force search and logarithmic recovery techniques to retrieve plaintexts, which are computationally intensive.

To ensure privacy during the federated learning process, SecureDyn-FL employs a post-Cramer transformed additive ElGamal encryption scheme, enabling additive homomorphism while maintaining computational efficiency. This allows encrypted local model updates to be aggregated on the server without decryption, thereby preventing information leakage to both external adversaries and honest-but-curious servers. Unlike differential privacy, which introduces noise and can negatively impact utility, this encryption mechanism preserves the exact model updates, thus avoiding accuracy degradation during aggregation. The approach balances privacy and performance, offering a practical alternative to DP or fully homomorphic encryption, which often impose significant computational costs. The process involves the following steps:

First, select two large, distinct prime numbers \( p_p \) and \( q_p \) of equal bit-length, and compute \( n = p_p \cdot q_p \) and \( \lambda = \text{lcm}(p_p - 1, q_p - 1) \), ensuring that \( n^2 < p \). Then, choose a generator \( g_p \in \mathbb{Z}^*_{n^2} \) such that
\begin{equation}
\gcd(L(g_p^\lambda \mod n^2), n) = 1,
\end{equation}
where the function \( L(x) = \frac{x - 1}{n} \) denotes the L-function used in the Paillier cryptosystem. With this setup, the message \( m \) is transformed into \( m' = g_p^m \mod n^2 \), and the encryption scheme is expressed as:

\begin{equation}
c := \langle c_1 = g^r \mod p,\quad
c_2 = g_p^m \mod n^2 \cdot y^r \mod p \rangle
\label{eq:post_cramer_enc}
\end{equation}

This modified ciphertext structure retains additive homomorphism, enabling secure aggregation in FL. Upon decryption using ElGamal, the transformed message \( m' = g_p^m \mod n^2 \) is obtained. The original plaintext \( m \) is recovered through the expression:

\begin{equation}
m = \frac{L(g_p^m \mod n^2)}{L(g_p^\lambda \mod n^2)} \mod n
\label{eq:post_cramer_dec}
\end{equation}

Therefore, if two clients \( u_1 \) and \( u_2 \) share the public parameters \( (g_p, n, \lambda) \), they can independently decrypt the ciphertext and retrieve the plaintext \( m \). The complete encryption, homomorphic addition, and decryption process is summarized in Algorithm~\ref{alg:enc_dec}.

\begin{algorithm}[H]
\caption{Post-Cramer Transformation Based Encryption and Decryption}
\label{alg:enc_dec}

\textbf{Encryption:} $\text{Enc}_{pk}(m) \rightarrow \llbracket m \rrbracket$
\begin{algorithmic}[1]
\State Convert plaintext \( m \) to \( m' = g_p^m \bmod n^2 \) using post-Cramer transformation.
\State Compute ciphertext: 
\Statex \hspace{\algorithmicindent} \( c_1 = g^r \bmod p \)
\Statex \hspace{\algorithmicindent} \( c_2 = g_p^m \bmod n^2 \cdot y^r \bmod p \)
\Statex \hspace{\algorithmicindent} \( \llbracket m \rrbracket = \langle c_1, c_2 \rangle \)
\end{algorithmic}

\vspace{1mm}
\textbf{Homomorphic Addition:} $\llbracket m_1 \rrbracket \cdot \llbracket m_2 \rrbracket = \llbracket m_1 + m_2 \rrbracket$
\begin{algorithmic}[1]
\State Let \( \llbracket m_1 \rrbracket = \langle c_{11} = g^{r_1} \bmod p, \; c_{12} = g_p^{m_1} \bmod n^2 \cdot y^{r_1} \bmod p \rangle \)
\State Let \( \llbracket m_2 \rrbracket = \langle c_{21} = g^{r_2} \bmod p, \; c_{22} = g_p^{m_2} \bmod n^2 \cdot y^{r_2} \bmod p \rangle \)
\State Compute:
\Statex \hspace{\algorithmicindent} \( c_1 = g^{r_1 + r_2} \bmod p \)
\Statex \hspace{\algorithmicindent} \( c_2 = g_p^{m_1 + m_2} \bmod n^2 \cdot y^{r_1 + r_2} \bmod p \)
\Statex \hspace{\algorithmicindent} \( \llbracket m_1 + m_2 \rrbracket = \langle c_1, c_2 \rangle \)
\end{algorithmic}

\vspace{1mm}
\textbf{Decryption:} $\text{Dec}_{sk}(\llbracket m \rrbracket) \rightarrow m$
\begin{algorithmic}[1]
\State Compute:
\Statex \hspace{\algorithmicindent} \( g_p^m \bmod n^2 = \frac{c_2 \bmod p}{g^{x r} \bmod p} = \frac{g_p^m \bmod n^2 \cdot y^r}{g^{x r}} \bmod p \)
\State Apply L-function: \( L(x) = \frac{x - 1}{n} \)
\State Recover plaintext: 
\Statex \hspace{\algorithmicindent} \( m = \frac{L(g_p^m \bmod n^2)}{L(g_p^\lambda \bmod n^2)} \bmod n \)
\end{algorithmic}
\end{algorithm}

\subsection{Central Auditor Protocol}

The CA module includes three phases: Phase 1 generates keys and IDs as described in section, Phase 2 detects and filters poisoning gradients, and Phase 3 aggregates gradients, as discussed below.

\subsection*{Phase 1: System Initialization}
The audit module initializes the federated system by performing the following tasks:
\begin{itemize}
    \item Generates pair-keys (public-private) $K = (pk, qk)$ for each user $n$, where $n \in \mathcal{N}$ and $|\mathcal{N}| = N$.
    \item Assigns a unique tag ID $\{T_{ID1}, T_{ID2}, \dots, T_{IDn}\}$ to each user.
    \item Constructs the \textit{Update\_Table} that includes tuples $(T_{ID}, Table\_Tag)$ to facilitate CA on the user side.
\end{itemize}

\subsection*{Phase 2: Dynamic Temporal Gradient Auditing}

The original Phase 2 auditing mechanism, which relied on static clustering via GMM and MD filtering, exhibited vulnerabilities under realistic FL conditions, particularly in the presence of non-IID user data and adaptive adversaries. Static clustering at each round caused instability in detection, while single-round MD evaluation failed to account for temporal variations. To overcome these limitations, we propose an improved Phase 2 auditing strategy that integrates incremental GMM updating, temporal trajectory consistency analysis, and dynamic multi-threshold gradient filtering.

At each communication round \(t\), the auditor receives encrypted gradients from users. Instead of reinitializing clustering, the GMM parameters, denoted by \(\theta_t\), are updated incrementally by blending the previous round's parameters \(\theta_{t-1}\) with the newly estimated parameters \(\theta_{\text{new}}\) as follows:
\begin{equation}
\theta_t = \alpha \theta_{t-1} + (1-\alpha) \theta_{\text{new}},
\end{equation}
where \(\alpha \in (0,1)\) is the forgetting factor controlling the influence of historical information. This incremental updating allows the clustering model to adapt gradually to evolving data distributions without overreacting to local fluctuations, thereby improving robustness in non-IID environments.

Following clustering, the MD\(\text{MD}_i(t)\) for each user's gradient is computed relative to the mean and covariance of its assigned cluster. To capture user behavior over time, we introduce a trajectory consistency analysis by tracking the variation in MD values across consecutive rounds. Specifically, for each user \(i\), the trajectory consistency score is defined as
\begin{equation}
\Delta \text{MD}_i = |\text{MD}_i(t) - \text{MD}_i(t-1)|.
\end{equation}

A small \(\Delta \text{MD}_i\) suggests stable gradient behavior consistent with benign updates, while large variations may indicate adversarial manipulation or instability.

To further enhance detection, a dynamic multi-threshold filtering mechanism is employed, maintaining three adaptively updated thresholds:
\begin{itemize}
  \item   a gradient magnitude threshold derived from the historical distribution of benign gradient norms
    \item a MD threshold \(T_{\text{MD}}\) defined as
\begin{equation}
T_{\text{MD}} = k \times \sigma_{\text{normal}},
\end{equation}
where \(k\) is a sensitivity constant and \(\sigma_{\text{normal}}\) is the standard deviation of MD values from benign users, 

  \item  a trajectory consistency threshold based on the empirical distribution of \(\Delta \text{MD}_i\) values. 
\end{itemize}
Gradients are filtered according to their compliance with these thresholds: those satisfying all thresholds are accepted, those moderately deviating are down-weighted, and those significantly deviating are rejected.

Through this integrated dynamic auditing process, the model not only filters out obvious malicious gradients but also mitigates the impact of stealthy or slowly-drifting adversarial updates. 

    
    
    


\subsection*{Phase 3: Byzantine-Resilient Gradient Aggregation}
Once the \textit{Update\_Table} is populated, the auditor transfers it to the server:

\begin{itemize}
    \item The server and auditor interact to ensure Byzantine-tolerant aggregation.
    \item For each communication round, the server applies a robust aggregation function to the filtered gradients.
    \item The global model is updated with the aggregated results, reinforcing reliability and resistance to Byzantine attacks in the FL environment.
\end{itemize}

\subsection{Adversarial Resilience Analysis}
SecureDyn-FL is designed to withstand multiple types of adversarial behaviors that typically arise in federated learning, including both data/model poisoning attacks and inference-based privacy attacks such as model inversion.
\begin{itemize}
    \item Resilience to Poisoning Attacks: In federated intrusion detection, malicious clients may attempt to manipulate global model behavior by injecting carefully crafted updates, either to reduce overall detection performance (untargeted poisoning) or to deliberately misclassify specific attack types (targeted poisoning). SecureDyn-FL mitigates these threats through its temporal gradient auditing mechanism, which leverages an incremental GMM and MD trajectory scoring. By continuously modeling the distribution of benign gradient patterns over time, the auditor can detect deviations that are subtle or adaptive in nature—such as attacks that gradually poison the model over multiple rounds—without requiring labeled attack examples. Abnormal updates are flagged and removed prior to aggregation, ensuring that malicious contributions have minimal impact on the global model.
    \item Resilience to Model Inversion Attacks: In addition to malicious clients, federated learning is vulnerable to honest-but-curious servers that attempt to reconstruct clients’ local data from shared gradients using model inversion techniques. In SecureDyn-FL, this class of attacks is neutralized through the use of additive homomorphic encryption, which ensures that local model updates are encrypted before transmission and remain encrypted throughout the aggregation process. As a result, the server never observes raw gradient information, making inversion or membership inference attempts infeasible. Under the adopted cryptosystem, the probability of successful gradient reconstruction is negligible without the private keys of the participating clients.
    \item Complementary Protection: By jointly integrating temporal gradient auditing and encryption, SecureDyn-FL provides defense in depth against heterogeneous adversarial strategies. While encryption protects against server-side inference and eavesdropping, auditing detects and filters malicious client behavior. This complementary design enhances the overall robustness of federated intrusion detection systems in realistic IoT environments, where both privacy breaches and active poisoning threats may co-occur.
\end{itemize}
\section{THEORETICAL ANALYSIS}
\label{theoretical_analysis}

Secure training in SecureDyn-FL consists of three key phases: \textbf{users' local training}, \textbf{auditor training}, and \textbf{robust aggregation}. Below, we present the theoretical foundations of each phase, supported by lemmas and theorems to ensure both correctness and security.

\subsection*{a) Users' Local Training}
In this phase, we assume that the proportion of malicious users does not exceed 50\%. During training round $t$, every user $n_t^x$, where $x$ ranges from 1 to $m$, obtains the encrypted global model $\widetilde{M_t}_{pk}$, ecrypts it using their public key $pk$, and trains a local model to compute the gradient vector $g_t^x$. To improve the update process, we apply Stochastic Gradient Descent (SGD) with momentum, incorporating past gradients through an exponential decay factor $\partial$ ($0 < \partial < 1$):
\begin{equation}
    g^x \sim \sum_{l \in [0,t]} \partial^{t-l} g_l^x.
\end{equation}

As shown in prior work (e.g., [41]), momentum accelerates convergence, reduces variance, and enhances robustness, helping mitigate poisoning attacks in federated training.

\subsection*{b) Auditor Training}
The auditor identifies malicious gradients using GMM and MD. The process involves the following steps:

\begin{itemize}
    \item Collecting labeled data containing benign gradients $g_i$ and malicious gradients $g^*_i$.
    \item Extracting and normalizing features $X_i$ for consistent scaling.
    \item Dividing the data into training ($D_{\text{train}}$) and validation ($D_{\text{val}}$) sets.
    \item Training a GMM using Expectation-Maximization [42] to estimate parameters for benign and malicious clusters, initially setting $k = 2$ clusters [6].
        \item  Iteratively adjusting the number of clusters using the Bayesian information criterion (BIC) to adapt to the gradient distribution and minimize misclassifications.
\end{itemize}

This dynamic clustering approach ensures effective adaptation to encrypted gradient distributions, reducing false positives, particularly in environments with predominantly benign users.

\section*{Correctness}
Correctness ensures the accurate sharing of encrypted gradients and effective auditing to filter out malicious contributions.

\begin{lemma}
Upon receiving user updates via the audit table $\phi$, the server employs these updates to refine the global model. The incremented version of the global model is derived from the equation:
\begin{equation}
    \phi_{\tau+1} = \frac{1}{N}\sum_{i=1}^N \left(1 - M_{k_i}^t\right)\phi_t + M_{k_i}^t\phi_{k_i}^\tau.
\end{equation}
Here, $\phi_{\tau+1}$ represents the updated audit table at the subsequent time step, $N$ stands for the total number of users, $M_{k_i}^t$ is the model from the $i$-th user, and $\phi_{k_i}^\tau$ denotes the $i$-th user's entry in the audit table at time step $\tau$.
\end{lemma}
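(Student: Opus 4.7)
The plan is to prove the lemma by expanding the audit-driven aggregation protocol described in Phase~3 and showing that the update rule in the statement is exactly what results from averaging user contributions conditioned on the auditor's acceptance decisions. First, I would interpret the quantity $M_{k_i}^t$ as the binary auditing indicator produced by the central auditor on user $i$ at round $t$ (equivalently, $V_a(g_i)\in\{0,1\}$ from Eq.~\ref{eq2}), so that $M_{k_i}^t=1$ whenever the user's encrypted update passes the incremental GMM, MD, and trajectory-consistency thresholds of Phase~2, and $M_{k_i}^t=0$ otherwise. Under this identification, the audit table entry $\phi_{k_i}^\tau$ corresponds to the encrypted local contribution successfully submitted by user $i$, while $\phi_t$ denotes the retained table value when the user's update is rejected or withheld.

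With the notation fixed, the main calculation is to express the updated table as a convex combination over two disjoint events, \emph{update accepted} and \emph{update rejected}, for each user. In the accepted branch the server replaces the stored entry with $\phi_{k_i}^\tau$, and in the rejected branch it preserves $\phi_t$. Using $M_{k_i}^t$ as a selector immediately yields the per-user contribution $(1-M_{k_i}^t)\phi_t + M_{k_i}^t\phi_{k_i}^\tau$, and averaging over the $N$ participating users gives exactly the claimed identity. I would then verify that this averaging is meaningful in the encrypted domain by invoking the additive homomorphism of the post-Cramer transformed ElGamal scheme established in Eq.~\eqref{eq:add_hom}, which guarantees that the multiplicative ciphertext operation carried out by the server corresponds, after decryption, to the scalar-weighted plaintext sum used on the right-hand side.

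Finally, I would close the argument by checking consistency with the Byzantine-resilience claim of Phase~3: because the auditor filters gradients according to the three dynamic thresholds (magnitude, $T_{\text{MD}}$, and $\Delta\text{MD}_i$), the indicator $M_{k_i}^t$ vanishes whenever a user's contribution lies outside the benign regime modeled by the incremental GMM, so the convex combination cannot be dominated by adversarial entries under the standing assumption that at most $50\%$ of users are malicious. The main obstacle I expect is the notational ambiguity around $M_{k_i}^t$, which the statement describes as ``the model from the $i$-th user'' yet appears in the formula as a scalar gating coefficient: the first technical step must therefore pin down this meaning via the auditing protocol, and the subsequent justification must argue that mixing ciphertexts in the encrypted audit table remains decryption-correct, which fails for arbitrary weighting schemes but holds here precisely because the coefficients are binary and the underlying cryptosystem is additively homomorphic.
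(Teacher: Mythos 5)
You should know at the outset that the paper offers no proof of this lemma: it is stated bare in the ``Correctness'' section and then simply invoked afterwards (``Under Lemmas 1 and 2, the audit protocol enables the Updated\_Table to carry user gradients\dots''), so the update rule is effectively asserted as a definition rather than derived. Your proposal is therefore a rational reconstruction rather than something that can be matched against an argument in the paper, and it has to be judged on its own terms.

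On those terms there are two genuine gaps. First, your entire derivation hinges on reading $M_{k_i}^t$ as the binary audit indicator $V_a(g_i)\in\{0,1\}$ from Eq.~\ref{eq2}, but the lemma itself glosses $M_{k_i}^t$ as ``the model from the $i$-th user,'' and nothing in the paper licenses the identification you make; if $M_{k_i}^t$ is a model (a vector of parameters) rather than a scalar selector, then $\left(1-M_{k_i}^t\right)\phi_t + M_{k_i}^t\phi_{k_i}^\tau$ is a mixing/EMA-style update and your accept/reject case analysis does not apply. You flag this ambiguity honestly, but flagging it does not resolve it, and the proof cannot go through without pinning it down. Second, even under your binary reading, the per-user decomposition you describe is an \emph{entry-wise replacement} rule for the audit table (accepted entries overwritten by $\phi_{k_i}^\tau$, rejected entries retained as $\phi_t$), whereas the stated formula averages all $N$ per-user expressions into a single quantity $\phi_{\tau+1}$ that the lemma simultaneously calls ``the updated audit table'' and ``the incremented version of the global model.'' Averaging entry-wise replacements is not the same operation as replacing entries, so your two-branch convex-combination argument yields the right-hand side only after an additional pooling step you never justify. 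The homomorphism remark is also slightly off target: the additive law in Eq.~\eqref{eq:add_hom} gives you ciphertext addition only, while the $1/N$ scaling and the multiplication by the gating coefficient require the scalar-multiplication property that the paper merely asserts in Lemma~4; you would need to invoke that, not the addition law, to make the encrypted-domain computation decryption-correct.
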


\begin{lemma}
The CA holds the security property, such as pair keys $(pk, qk)$, and learns nothing about private data like gradients and the aggregation results.
\end{lemma}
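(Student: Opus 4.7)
The plan is to establish this lemma via a simulation-based indistinguishability argument grounded in the semantic security of the post-Cramer transformed additive ElGamal scheme proved correct in Algorithm~\ref{alg:enc_dec}. First, I would formalize the view of the CA during a single training round as the tuple $V_{\mathrm{CA}} = (\{T_{{ID}_i}\}, \{pk_i\}, \{\llbracket g_i \rrbracket\}, \{\Delta\mathrm{MD}_i\}, \theta_t, \phi)$, comprising the registered tag identities, distributed public keys, encrypted gradients, trajectory consistency scores, the incrementally updated GMM parameters, and the audit table. The critical observation is that, although the CA provisions key material during Phase~1, every gradient transmitted to the auditor arrives as a ciphertext under the owner's $pk_i$, and the homomorphic aggregation step outputs another ciphertext $\llbracket \sum_i g_i \rrbracket$ rather than a plaintext sum.

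The second step is to construct a polynomial-time simulator $\mathcal{S}$ that, given only the public parameters $(p, g, g_p, n, \{pk_i\})$ and the set of tag identities, produces a view $V_{\mathrm{sim}}$ that is computationally indistinguishable from $V_{\mathrm{CA}}$. The simulator samples arbitrary dummy plaintexts $\tilde{g}_i$, encrypts them with the transformed ElGamal scheme of Eq.~\eqref{eq:post_cramer_enc}, and runs the exact same auditing pipeline. Indistinguishability between $V_{\mathrm{CA}}$ and $V_{\mathrm{sim}}$ then reduces, via a standard hybrid argument over the $N$ client ciphertexts, to the Decisional Diffie--Hellman assumption under which ElGamal is IND-CPA secure; because the post-Cramer transformation preserves semantic security while augmenting the scheme with additive homomorphism, the reduction carries through without modification. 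Consequently, the joint distribution of the ciphertexts reveals nothing about the underlying gradients $g_i$ beyond what is already implicit in the public parameters.

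The third step is to handle the auditing outputs themselves, which superficially appear to be plaintext real numbers. Here I would argue that the Mahalanobis distance $\mathrm{MD}_i(t)$ and the forgetting-factor update $\theta_t = \alpha \theta_{t-1} + (1-\alpha)\theta_{\mathrm{new}}$ can be expressed as linear and inner-product operations over encrypted sufficient statistics, so that the auditor manipulates ciphertexts rather than plaintexts and observes only \emph{relative} cluster-membership decisions. The eventual per-client label in $\{\text{accept},\text{down-weight},\text{reject}\}$ is a coarse partition that, by the aggregation correctness result in Lemma~1, is strictly less informative than the individual $g_i$, and hence cannot be inverted to recover any single gradient without breaking the underlying encryption.

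The main obstacle, in my view, will be rigorously justifying that the trajectory consistency score $\Delta \mathrm{MD}_i$ and the GMM update in Phase~2 genuinely operate over the encrypted domain, since the narrative of Section~\ref{proposed_model} does not make this entirely explicit. I would close this gap by specifying that the auditor computes $\mathrm{MD}_i(t)$ using homomorphically evaluated inner products against encrypted cluster means, comparing the result to the encrypted threshold $T_{\mathrm{MD}}$ via a secure comparison protocol, so that only a single bit per client leaks. Combined with the IND-CPA reduction and the correctness guarantee of Lemma~1 for the aggregated output, this establishes that both the individual gradients and the final aggregated result remain hidden from the CA, completing the proof of the lemma.
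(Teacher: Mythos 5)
The paper never supplies an explicit proof of this lemma: it is stated bare in the Correctness subsection, and the only substantive confidentiality argument appears later in the proof of Theorem~2, which is an informal three-case analysis (ElGamal transmission security via the discrete-logarithm problem, poisoning-detection privacy via one-time-pad masks distributed by $(T,n)$-threshold secret sharing, and key security via Shamir sharing in PRKG). Your simulation-based formulation is considerably more rigorous in style than anything the paper offers, but it has two genuine gaps relative to what the system actually does.

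First, and most seriously, your reduction to IND-CPA presumes the CA's view contains only public keys. Phase~1 of the Central Auditor Protocol states explicitly that the audit module \emph{generates} the pair-keys $(pk, qk)$ for each user, so on the face of the system description the CA holds the private keys and your simulator premise fails outright: a CA that can decrypt $\llbracket g_i \rrbracket$ trivially learns the gradients, and no hybrid argument over ciphertexts can rescue this. The paper escapes this only by asserting (in the proof of Theorem~2) that keys are generated in a distributed manner via Shamir's secret sharing so that no single entity, including the CA, ever holds a complete private key. That ingredient is essential to the lemma and is absent from your proof; without invoking the threshold structure of PRKG, the claim that the CA ``learns nothing'' cannot be established. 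Second, your third step asserts that the Mahalanobis distances, the GMM update, and the threshold comparisons are evaluated homomorphically over ciphertexts with a secure comparison protocol leaking one bit per client. The paper describes no such machinery; its stated mechanism for poisoning-detection privacy is one-time-pad masking $c = m \oplus k$ with masks reconstructible only by $T$ or more share-holders. You correctly flag this as the main obstacle, but the fix you propose (homomorphic inner products against encrypted cluster means) is a redesign of Phase~2 rather than a proof about the system as specified, and it would not obviously support the nonlinear covariance-inverse computation that the Mahalanobis distance requires. To align with the paper you would need to replace that step with an argument that the masked statistics reveal nothing to any coalition of fewer than $T$ parties.
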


\begin{lemma}
AHE ensures the confidentiality of both the user and server sides since user $i$ encrypts its gradients denoted by $x_i^m$. Then, each user sends $\text{Enc}(\sigma_i^m)$.
\end{lemma}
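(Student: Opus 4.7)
\medskip
\noindent\textbf{Proof proposal.} The plan is to reduce the confidentiality claim to the semantic security of the post-Cramer transformed ElGamal scheme defined in Algorithm~\ref{alg:enc_dec}, and then lift that single-message guarantee to both the user-side transmission and the server-side aggregation. First I would recall that for user $i$ with encrypted gradient $\text{Enc}(x_i^m) = \langle c_1,c_2\rangle = \langle g^r \bmod p,\; g_p^{x_i^m} \bmod n^2 \cdot y^r \bmod p\rangle$, recovering $x_i^m$ from $\langle c_1,c_2\rangle$ without the secret key $qk$ requires distinguishing the mask $y^r$ from a uniformly random element of the subgroup generated by $g$. I would then invoke the Decisional Diffie--Hellman assumption on the prime-order subgroup to argue that any probabilistic polynomial-time adversary $\mathcal{A}$ that gains non-negligible advantage in recovering $x_i^m$ could be converted into a DDH distinguisher, yielding a contradiction. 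This closes the user-side part: an eavesdropper on the channel observes only a semantically secure ciphertext.

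Next I would address the server-side (honest-but-curious) case. By Lemma~2 the auditor only possesses the pair keys and does not hold the decryption key; the server never receives plaintext gradients. I would formalize the server's view at round $\tau$ as the tuple $\{\text{Enc}(x_i^m)\}_{i=1}^{N}$ together with the homomorphically combined ciphertext $\prod_i \text{Enc}(x_i^m) = \text{Enc}\!\left(\sum_i x_i^m\right)$ obtained via Eq.~\eqref{eq:add_hom}. A simulator $\mathcal{S}$ that is given only the public parameters $(g,p,g_p,n,y)$ can produce a computationally indistinguishable transcript by encrypting arbitrary dummy plaintexts, since the ciphertexts are rerandomized by the fresh randomness $r_i$. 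I would conclude that the server's view is simulatable, hence it learns nothing about the individual $x_i^m$ beyond what is implied by the aggregate (which by protocol design is only released after auditing and never in decrypted form to the server).

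Finally, to tie the two sides together I would invoke the additive-homomorphism correctness shown in Eq.~\eqref{eq:add_hom} and Eq.~\eqref{eq:post_cramer_dec}: the homomorphic combination does not degrade the semantic security of the constituent ciphertexts because the product of two IND-CPA ciphertexts under fresh randomness $r_1+r_2$ is itself an IND-CPA ciphertext in the same scheme. This preserves confidentiality throughout aggregation. The main obstacle I anticipate is the hybrid group structure used by the post-Cramer transformation, where the message is first exponentiated modulo $n^2$ and then masked modulo $p$: I will need to argue carefully that the Paillier-style plaintext transformation $m \mapsto g_p^m \bmod n^2$ does not weaken the DDH-based masking in $\mathbb{Z}_p$, which amounts to showing that the composite distribution is still pseudorandom relative to an adversary lacking either $qk$ or the factorization of $n$. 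Once this composite-security step is established, the lemma follows directly by combining user-side IND-CPA security with the server-side simulation argument.
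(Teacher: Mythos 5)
You should first note that the paper does not actually supply a proof of this lemma: it is stated bare, and the nearest supporting argument is the proof of Theorem~2, which handles gradient confidentiality by an informal appeal to the hardness of the discrete logarithm problem (given $(g, g^x \bmod p)$, find $x$), together with one-time-pad masking during poisoning detection and Shamir $(T,n)$-threshold sharing of the key. Your route --- an IND-CPA reduction to DDH for the user side plus a simulation argument for the honest-but-curious server's view --- is therefore a genuinely different and considerably more formal decomposition than anything in the paper. What it buys is a precise meaning for ``confidentiality of both sides'' (ciphertext indistinguishability and simulatability of the server's transcript), whereas the paper's argument only establishes one-wayness of key recovery, which is strictly weaker than semantic security.

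However, the step you defer to the end is not a formality, and as sketched the reduction does not go through. In the post-Cramer scheme of Algorithm~\ref{alg:enc_dec}, the quantity actually encrypted under ElGamal is $m' = g_p^m \bmod n^2$, an essentially arbitrary element of $\mathbb{Z}_{n^2}$ embedded into $\mathbb{Z}_p$ via the condition $n^2 < p$. DDH does not hold in all of $\mathbb{Z}_p^*$; it holds (conjecturally) only in a prime-order subgroup, and textbook ElGamal is semantically secure only when plaintexts are encoded into that subgroup. Since $m'$ is not guaranteed to lie in $\langle g \rangle$, the mask $y^r$ does not pseudorandomize $c_2$ over the full message space: the Legendre symbol of $y^r$ modulo $p$ is publicly computable from $g$, $y$, and $c_1$, so the Legendre symbol of $m'$ leaks from $c_2$, contradicting IND-CPA. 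Consequently ``any adversary recovering $x_i^m$ yields a DDH distinguisher'' is not the right reduction target, and the claim that the homomorphic product of two such ciphertexts (Eq.~\eqref{eq:add_hom}) is again IND-CPA inherits the same defect. To close the gap you would need either to encode messages into the DDH-hard subgroup --- which conflicts with the Paillier-style decoding of Eq.~\eqref{eq:post_cramer_dec} --- or to weaken the lemma to the one-wayness guarantee that the paper's Theorem~2 actually argues. Separately, your server-side simulator is slightly too strong as stated: since authorized parties can decrypt the aggregate, the simulator must be handed $\sum_i x_i^m$ rather than nothing.
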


\begin{lemma}
The cryptosystem is clear by defining a random value as $r$, $r = ab$ to have two prime factors of equal size $a$ and $b$ and letting $c \in \mathbb{Z}$ and $k_1, k_2 \in \mathbb{Z}^\times_r$. It is possible to compute $\text{Enc}(k_1 + k_2)$ using only the public key and the Encryption $\text{Enc}(k_1)$, $\text{Enc}(m_2)$. Additionally, $\text{Enc}(m \times k_1)$ can be calculated given a constant $m$.
\end{lemma}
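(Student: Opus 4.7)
The plan is to establish additive homomorphism directly from the ciphertext structure given in Eq.~\eqref{eq:post_cramer_enc}, and then extend the same manipulation to the scalar-multiplication claim. First I would fix the public parameters $(p, g, y, g_p, n)$ and write out the two ciphertexts for $k_1, k_2 \in \mathbb{Z}^\times_r$ using fresh randomness $r_1, r_2$:
\[
\llbracket k_1 \rrbracket = \langle g^{r_1},\; g_p^{k_1} y^{r_1} \rangle, \qquad \llbracket k_2 \rrbracket = \langle g^{r_2},\; g_p^{k_2} y^{r_2} \rangle,
\]
with arithmetic modulo $p$ in the outer group and the inner exponent terms $g_p^{k_i}$ reduced modulo $n^2$. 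The public-key-only operation is componentwise multiplication of the two ciphertexts, which yields $\langle g^{r_1+r_2},\; g_p^{k_1+k_2} y^{r_1+r_2} \rangle$. I would then observe that this tuple has exactly the canonical form of $\mathrm{Enc}(k_1+k_2)$ with effective randomness $r=r_1+r_2$, mirroring the derivation in Eq.~\eqref{eq:add_hom}, which establishes additive homomorphism without any secret material.

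For the scalar-multiplication statement, I would next raise both components of $\llbracket k_1 \rrbracket$ to the constant exponent $m$, obtaining $\langle g^{m r_1},\; g_p^{m k_1} y^{m r_1}\rangle$, and argue that this matches the canonical ciphertext form for the plaintext $m\cdot k_1$ with randomness $m r_1$. Only the public elements $g$, $g_p$, and $y$ are required for this step, so the computation remains entirely public-key-only, as demanded by the lemma.

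The third step would be to confirm correctness of decryption on the aggregated ciphertext using Eq.~\eqref{eq:post_cramer_dec}. Writing $x$ for the secret key, the quotient $c_2 / c_1^x$ eliminates the $y^{r_1+r_2}$ mask and returns $g_p^{k_1+k_2} \bmod n^2$; applying the $L$-function quotient then recovers $k_1+k_2 \bmod n$, which suffices because gradient magnitudes in SecureDyn-FL are bounded by the clipping step of Eq.~\eqref{eq:clipping} well below $n$. The analogous computation for $\mathrm{Enc}(m \cdot k_1)$ follows the same template.

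The main obstacle I anticipate is rigorously reconciling the two coexisting moduli: the outer ElGamal group $\mathbb{Z}_p^*$ and the inner Paillier-style group $\mathbb{Z}_{n^2}$ in which the lift $g_p^{k_i}$ lives. I would have to invoke the protocol's requirement that $n^2 < p$ so that each $g_p^{k_i} \bmod n^2$ embeds faithfully as an element of $\mathbb{Z}_p^*$ and multiplies correctly with its $y^{r_i}$ mask, and then carefully bound $k_1+k_2$ and $m\cdot k_1$ so that no silent wrap-around modulo $n$ occurs during $L$-function recovery. Handling this cross-modulus bookkeeping, rather than the algebraic manipulation itself, is what makes the argument delicate.
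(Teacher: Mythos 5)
Your proposal is correct and follows essentially the same route the paper itself uses: the paper never gives Lemma~4 a standalone proof, instead relying on the componentwise ciphertext multiplication of Eq.~\eqref{eq:add_hom} and Algorithm~\ref{alg:enc_dec}, which is exactly the derivation you reproduce for $\text{Enc}(k_1+k_2)$ and extend by exponentiation to the constant $m$ for $\text{Enc}(m\times k_1)$. Your added attention to the cross-modulus condition $n^2 < p$ and to avoiding wrap-around modulo $n$ during $L$-function recovery is a point the paper states as a setup requirement but does not argue, so your treatment is, if anything, more careful than the source.
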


Under Lemmas 1 and 2, the audit protocol enables the $\text{Updated\_Table}$ to carry user gradients for the aggregation process. The $\text{Updated\_Table}$ is populated by users ($U_i \subseteq U$) at round $t$, with gradients reflecting diverse data quality and distributions. Trustworthy auditors and user compliance with the workflow, absence of malicious activity, ensure correct gradient vector $W$ aggregation.

\begin{theorem}[Robustness against malicious gradients via auditability - IID]

Based on Lemma 1 and the participation of registered users as $(U_i \subseteq U, i \in N)$, we have $\{U_1,U_2,\ldots,U_n\}$, that $j \subseteq U_i \subseteq U$ denotes users that send malicious gradients. The assumption is $|j| \leq \lfloor n/2 \rfloor$, and the auditor traces users based on the audit protocol to find the adversarial users. Our proposed model checks the gradient distribution and behavior for all users to remove the effect of adversary user participation and obtain robustness. Therefore, the auditor can confirm user $k$ and calculate that the gradient $g_k$ is valid. Adversarial users who send malicious updates will have their ID revoked. We can guarantee to reduce their effect on them.

In this regard, based on proof of Lemma 1, the symbol $U_i U_{i+1}$ represents users who uploaded data in round $t$, but some users may still inject false data in round $t + 1$ in the IID setting. Thus, each user holds a local gradient denoted as $x_n$ (where $n \in U$). According to Lemma 2, the auditor can obtain the following expression:
\begin{equation}
    \llbracket \overline{W} \rrbracket_{pk} = \prod_{i=1}^n \llbracket \overline{W}_i \rrbracket_{pk} = \sum_{i=1}^n \llbracket \overline{W}_i \rrbracket_{pk}.
\end{equation}
Here, $\overline{W}$ represents the aggregated encrypted gradients, and $\overline{W}_i$ represents the encrypted gradient from the $i$-th user. The symbol $\prod_{i=1}^n$ denotes the product operator, indicating that the encrypted gradients from all users are multiplied together. Likewise, $\sum_{i=1}^n$ represents the summation operator, indicating that the encrypted gradients from all users are summed together. It is important to note that the encrypted gradients have been received confidentially and can be expressed as $\overline{W} = \sum_{i=1}^n W_i$ (see Lemma 3).

In the following round, the auditor computes the GMM for gradients, then tracks $w_{t+1}$ with the MD, using the parameters provided by every user in round $t + 1$,
\begin{equation}
    w_{t+1} \leftarrow \frac{\sum_{u\subset U_t} w_u^t}{\sum_u^t m_k} M
\end{equation}
to track benign and adversaries. The assumption is that there are fewer than half the users who are adversaries $\lfloor \frac{U_t}{2} \rfloor$; hence, the mean of each cluster is expected to correspond to the similarity value of at least one benign user.

The auditor first calculates the covariance matrix for cluster the variance and correlation structure and then applies MD to detect injected false data (see Audit section V-B). Therefore, $\overline{w}_i, \overline{w}_j, \overline{w}_z$, $i,j,z \in FM$ ($i,z \in [1, N]$, $j \in [1, N/2]$) are gradients corresponding to the similarity position of benign users and adversaries. According Lemma 4, benign users gradients, $\overline{w}_i, \overline{w}_z \in FM$ satisfies:
\begin{equation}
    \overline{w}_i + \overline{w}_z \equiv x_i^t + y_z^t \pmod{m_t}
\end{equation}
While gradients of adversarial users $\overline{w}_j \in FM$ cannot satisfy Eq. (3). Note that $t = \{1, 2, \ldots, k\}$, $x_i^t$, $t$ is an element in the $P_i$'s original gradient $W_i$.
\end{theorem}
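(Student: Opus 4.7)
The plan is to chain together Lemmas 1--4 with the GMM/MD auditing scheme of Phase 2. First I would invoke Lemma 1 to describe how gradients delivered through the audit table enter the global update $\phi_{\tau+1}$. Combined with Lemmas 3 and 4, which together guarantee confidentiality and additive homomorphism of the post-Cramer ElGamal scheme, the server can form the aggregate $\llbracket \overline{W} \rrbracket_{pk} = \prod_{i=1}^{n} \llbracket \overline{W}_i \rrbracket_{pk}$ and recover $\sum_i W_i$ upon decryption, while the auditor (Lemma 2) learns nothing about any individual $W_i$. This establishes the correctness half of the theorem: aggregation itself is faithful to the set of admitted gradients.

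Next I would turn to the robustness half. Under the assumption $|j| \leq \lfloor n/2 \rfloor$, benign users form the majority of the sample used to fit the incremental GMM with parameters $\theta_t = \alpha \theta_{t-1} + (1-\alpha)\theta_{\text{new}}$, so the dominant mixture component captures the benign distribution and its mean and covariance approximate those of honest gradients. Computing the Mahalanobis distance $\text{MD}_i(t)$ with respect to this component, benign gradients fall below the dynamic threshold $T_{\text{MD}} = k \sigma_{\text{normal}}$, whereas malicious submissions that differ enough to shift the global model must necessarily produce either a large $\text{MD}_i(t)$ or a large trajectory jump $\Delta \text{MD}_i = |\text{MD}_i(t) - \text{MD}_i(t-1)|$. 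The multi-threshold decision rule accepts, down-weights, or rejects updates accordingly, and the auditor revokes the $T_{ID}$ of any user whose gradient persistently violates the filter, as prescribed in Phase 1.

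The final step is the algebraic consistency check. By Lemma 4, valid encrypted gradients obey the congruence $\overline{w}_i + \overline{w}_z \equiv x_i^t + y_z^t \pmod{m_t}$, so the surviving audit table contains only consistent contributions, and the update $w_{t+1} \leftarrow \bigl( \sum_{u \in U_t} w_u^t / \sum_u m_k \bigr) M$ is therefore driven entirely by benign gradients. The hardest step is the statistical separation argument: showing that, after the Cramer transformation, benign gradients concentrate tightly enough around the dominant GMM component for $T_{\text{MD}}$ to control the false-positive rate, while simultaneously bounding the damage an adversary can inflict by staying inside the threshold. I would discharge this via a standard concentration argument on the Mahalanobis norm (for example assuming sub-Gaussian tails for benign updates) together with a stealth-bound lemma stating that any gradient whose MD lies below $T_{\text{MD}}$ can perturb the aggregate by at most $O(k \sigma_{\text{normal}} / n)$, which combined with $|j| \leq \lfloor n/2 \rfloor$ yields a vanishing adversarial contribution to $w_{t+1}$.
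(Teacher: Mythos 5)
Your proposal follows essentially the same route as the paper: the paper's argument for this theorem is exactly the chain you describe --- Lemmas 1--3 to justify that the auditor can form $\llbracket \overline{W} \rrbracket_{pk} = \prod_i \llbracket \overline{W}_i \rrbracket_{pk}$ and recover $\overline{W} = \sum_i W_i$, the majority-benign assumption $|j| \leq \lfloor n/2 \rfloor$ to argue that each GMM cluster mean corresponds to at least one benign user, Mahalanobis distance to flag injected gradients, and Lemma 4's congruence $\overline{w}_i + \overline{w}_z \equiv x_i^t + y_z^t \pmod{m_t}$ as the final consistency check that adversarial gradients fail. The one place you diverge is instructive: the step you correctly identify as the hardest --- a concentration bound showing benign gradients stay under $T_{\text{MD}}$ while any sub-threshold adversarial gradient perturbs the aggregate by at most $O(k\sigma_{\text{normal}}/n)$ --- is not actually proved in the paper; the paper simply asserts that the cluster means track benign users and that MD ``detects injected false data.'' So your sketch is faithful to the paper's argument and, if you carried out the sub-Gaussian concentration and stealth-bound lemma you propose, it would be strictly more rigorous than what the paper provides.
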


\begin{theorem}
The proposed scheme is secure in terms of user gradient confidentiality.
\end{theorem}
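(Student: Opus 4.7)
The plan is to prove confidentiality through a standard reduction-based argument, showing that any probabilistic polynomial-time adversary, including the honest-but-curious server of Threat~1 and any network eavesdropper, gains only a negligible advantage in distinguishing between encryptions of any two gradient vectors. The underlying hardness assumptions are the Decisional Diffie--Hellman (DDH) assumption on $\mathbb{Z}_p^*$, which supplies the semantic security of the underlying ElGamal layer, together with the Decisional Composite Residuosity (DCR) assumption that supports the post-Cramer encoding $m \mapsto g_p^{m} \bmod n^2$.

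First I would fix the adversary's view. By construction of the workflow and by Lemma~3, the server and the auditor only observe ciphertexts $\llbracket g_i \rrbracket = \langle g^{r_i} \bmod p,\; g_p^{g_i} \bmod n^2 \cdot y^{r_i} \bmod p \rangle$ and any homomorphic combinations thereof; by Lemma~2 they never hold the private decryption material $(qk,\lambda)$. I would then formalize the IND-CPA game: the adversary submits two equal-dimension gradients $g^{(0)}, g^{(1)}$, receives $\llbracket g^{(b)} \rrbracket$ for a random bit $b$, and must guess $b$. A standard hybrid over coordinates reduces multi-entry distinguishing to single-coordinate distinguishing, since every coordinate is blinded by an independent fresh randomness $r$.

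For the single-coordinate reduction, I would embed a DDH challenge $(g, g^\alpha, g^\beta, T)$ by setting the public key $y = g^\alpha$ and the first ciphertext component $c_1 = g^\beta$, and then output $c_2 = g_p^{m_b} \bmod n^2 \cdot T \bmod p$; when $T = g^{\alpha\beta}$ the simulation yields a valid encryption, whereas if $T$ is uniform the term $c_2$ is information-theoretically independent of $b$. Hence any non-negligible IND-CPA advantage implies a DDH distinguisher. To cover the full pipeline, I would appeal to the additive homomorphism established in Lemma~4 and Eq.~\eqref{eq:add_hom}: aggregated ciphertexts have the same syntactic form as fresh ciphertexts with effective randomness $r_1+r_2$, so the server's post-aggregation view is simulatable from the individual ciphertexts alone, and the Phase~2 auditing operates on encrypted statistics, inheriting confidentiality from the primitive.

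The hard part will be ruling out subtle leakage introduced specifically by the post-Cramer transformation, because the encoding $m \mapsto g_p^m \bmod n^2$ is evaluated with publicly known $(g_p,n)$ and must remain invertible during decryption via the $L$-function in Eq.~\eqref{eq:post_cramer_dec}. I would address this by treating the encoding as a Paillier-style injection into the subgroup of $n$-th residues and invoking DCR to argue that $g_p^m \bmod n^2$ is computationally indistinguishable from a random element of that subgroup, which composes cleanly with the DDH-based ElGamal masking so that the joint ciphertext inherits semantic security from both primitives. Finally, I would wrap the argument by noting that the non-collusion assumption of the threat model, together with Lemma~2, prevents any single party from recovering the plaintext gradients, thereby establishing confidentiality of user gradients throughout training and aggregation.
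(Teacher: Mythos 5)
Your proposal takes a genuinely different route from the paper. The paper's proof is an informal three-part case analysis over leakage scenarios: (1) gradient transmission, argued by appeal to the hardness of the discrete logarithm problem for the ElGamal layer; (2) poisoning detection, argued via one-time-pad masking of gradients with masks generated through $(T,n)$-threshold secret sharing; and (3) key security, argued via Shamir secret sharing in the PRKG so that fewer than $T$ parties cannot reconstruct the private key. Your reduction-based IND-CPA argument under DDH (plus DCR for the post-Cramer encoding) is, for the transmission component, actually the \emph{more} rigorous formulation: ElGamal's semantic security rests on DDH rather than on bare DL inversion, and your hybrid over coordinates and your observation that aggregated ciphertexts are syntactically fresh ciphertexts with randomness $r_1+r_2$ make the simulatability of the server's view explicit, which the paper only asserts.

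However, there is a genuine gap in coverage. Your treatment of Phase~2 dismisses the auditing step with the claim that it ``operates on encrypted statistics, inheriting confidentiality from the primitive,'' but this is exactly the point where the paper's proof does real work and where yours cannot: the auditor must compute GMM cluster assignments and Mahalanobis distances over client updates, and those computations are not supported by an additively homomorphic scheme alone. The paper resolves this with a separate mechanism --- one-time-pad blinding of the gradients with threshold-shared masks, so that no subset of fewer than $T$ colluding parties can strip the masks --- and without some such argument your proof leaves open the possibility that the auditor (or an adversary observing the auditor's inputs) learns the plaintext gradients during detection, which would defeat the theorem. Similarly, your closing appeal to ``the non-collusion assumption together with Lemma~2'' does not substitute for an argument about how the decryption key is generated and held; the paper grounds this in Shamir reconstruction requiring at least $T$ shares. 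To complete your proof you would need to either (i) add an explicit simulation argument for the auditor's view during Phase~2, showing what the auditor learns is computable from the auditing decisions alone, or (ii) adopt the paper's masking-plus-threshold-sharing mechanism and prove its secrecy separately.
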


\begin{proof}
In SecureDyn-FL, privacy leakage mainly occurs in the following three scenarios: 
\begin{enumerate}
    \item Gradient transmission process
    \item Poisoning detection process
    \item Key leakage
\end{enumerate}

\noindent\textbf{1. Gradient Transmission Security:}
SecureDyn-FL employs Elgamal encryption technology to protect the gradient information submitted by users during transmission. This security framework ensures that sensitive information exchanged between participants is not accessible to unauthorized adversaries. Furthermore, since the keys are generated in a distributed manner, no single entity within the system has complete knowledge of the key information. 

Mathematically, for an adversary to access user data without the private key, they would need to solve the discrete logarithm problem:
\begin{equation}
    \text{Given } (g, g^x \bmod p), \text{ find } x
\end{equation}
where $g$ is a generator of the multiplicative group, $p$ is a large prime, and $x$ is the private key. This problem is computationally infeasible for sufficiently large parameters.

\noindent\textbf{2. Poisoning Detection Privacy:}
To safeguard privacy during poisoning detection, SecureDyn-FL uses one-time pad encryption for gradients, which provides perfect secrecy when:
\begin{equation}
    c = m \oplus k
\end{equation}
where $c$ is the ciphertext, $m$ is the plaintext gradient, and $k$ is the random mask. The masks are generated by CPC members through $(T,n)$-threshold secret sharing, ensuring that:

\begin{equation}
    |S| < T \Rightarrow \Pr[\text{reconstruct } k] = \text{negligible}
\end{equation}
for any subset of colluding parties $S$.

\noindent\textbf{3. Key Security:}
The PRKG key generation in SecureDyn-FL uses Shamir's secret sharing where:
\begin{equation}
    f(x) = s + a_1x + \cdots + a_{T-1}x^{T-1} \bmod p
\end{equation}
and shares are $(x_i, f(x_i))$. The secret $s$ (private key) can only be recovered when:
\begin{equation}
    s = \sum_{i=1}^T f(x_i) \prod_{\substack{j=1 \\ j \neq i}}^T \frac{x_j}{x_j - x_i}
\end{equation}
which requires at least $T$ participants for successful reconstruction.

By analyzing these three aspects, including secure gradient transmission via Elgamal encryption, privacy-preserving poisoning detection using one-time pads with threshold masks, and robust key management through PRKG, we conclude that SecureDyn-FL effectively protects the confidentiality of gradient information.
\end{proof}

\begin{theorem}[Robustness against malicious gradients via auditability - non-IID]

According to Theorem 1, there is a subset of user gradients donated by $(H_i \subseteq U, i \in [N/2])$ that lie on the boundary of gradient similarities, as indicated by Eq. (2). This implies that users $i \in D_i$ possess non-IID data. However, the auditor must distinguish between legitimate users and adversaries, introducing gradient drift. Thus, in the $t$-th round, the auditor leverages the gradient of the global model $w^*_{t-1}$ from the previous round to analyze the current behavior of the gradient $w_t$, where $w^*_{t-1} = (w_t - w_{t-1})/\varepsilon$, and $\varepsilon$ represents the direction of the global model gradient (Lemma 4).

Consequently, the auditor approximates the benign gradient update in the $t$-th round as $\widetilde{w}^*_t \approx w^*_{t-1} = (w_t - w_{t-1})/\varepsilon$. In a gradient drift attack, the direction of the gradient $\varepsilon$ does not converge, as the attacker can use an inappropriate scaling factor $\varepsilon$ to amplify fake local model updates, ensuring their magnitudes are no smaller than those from genuine users. The attacker $i$ sends a fake gradient as $w^{*i}_t = -\varepsilon(w_t - w_{t-1})$.
\end{theorem}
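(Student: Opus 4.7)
The plan is to extend the argument of Theorem 1 by combining the incremental GMM updating, Mahalanobis--trajectory analysis, and multi-threshold filtering from Phase 2 to separate adversarial drift from the natural heterogeneity of non-IID clients. First, I would formalize what a benign trajectory looks like: for a legitimate client $i\in H_i$, the predicted update $\widetilde{w}^{*}_{t}\approx (w_{t}-w_{t-1})/\varepsilon$ supplied by Lemma~4 defines a reference direction, and the additive-homomorphic identity $\llbracket\overline{W}\rrbracket_{pk}=\sum_{i=1}^{n}\llbracket\overline{W}_i\rrbracket_{pk}$ established in Theorem~1 still yields a well-defined encrypted aggregate even when local distributions are heterogeneous. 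Thus the target benign trajectory can be reconstructed round by round as a reference for auditing.

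Next, I would argue that the attacker's update $w^{*i}_{t}=-\varepsilon(w_{t}-w_{t-1})$ is inconsistent with this benign cluster in two independent respects. Directionally, the sign-flipped scaling pushes $w^{*i}_{t}$ into the opposite half-space of $\widetilde{w}^{*}_{t}$, producing a large Mahalanobis distance $\text{MD}_i(t)$ under the covariance maintained by the incremental GMM; temporally, because the drift attack must repeatedly amplify by $\varepsilon$ to dominate genuine contributions, the trajectory score $\Delta\text{MD}_i=|\text{MD}_i(t)-\text{MD}_i(t-1)|$ is driven beyond the empirical distribution computed from benign users, so the multi-threshold filter---magnitude, $T_{\text{MD}}=k\,\sigma_{\text{normal}}$, and trajectory---either rejects or down-weights the update before Phase~3 aggregation.

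I would then close the argument with a counting step analogous to Theorem~1: since $|j|\leq\lfloor n/2\rfloor$, every cluster produced by the incremental GMM still contains at least one benign trajectory, so the forgetting factor $\alpha$ in $\theta_t=\alpha\theta_{t-1}+(1-\alpha)\theta_{\text{new}}$ prevents adversarial contamination of the reference statistics $(\mu_{\text{normal}},\sigma_{\text{normal}})$ used to calibrate thresholds. Combined with the trajectory score, this guarantees that $w^{*i}_{t}$ is flagged and excluded from aggregation, recovering the analogue of Eq.~(3) under non-IID conditions, while benign gradients $\overline{w}_i,\overline{w}_z$ continue to satisfy the congruence $\overline{w}_i+\overline{w}_z\equiv x_i^{t}+y_z^{t}\pmod{m_t}$ required for correct reconstruction.

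The hard part, I expect, will be disentangling the \emph{boundary users} $H_i$---benign clients whose gradients already sit at the edge of the similarity region because of skewed local data---from the attacker's amplified updates, since both produce above-average Mahalanobis distances in any single round. The single-round MD test alone cannot discriminate them, so the argument must lean explicitly on the temporal component: a boundary but benign client exhibits stable $\Delta\text{MD}_i$ across consecutive rounds, whereas the gradient-drift attacker must keep re-amplifying by $\varepsilon$, producing persistent jumps that accumulate in the trajectory score. Making this separation quantitatively tight---likely via a concentration bound on $\Delta\text{MD}_i$ under the incrementally smoothed GMM, parameterised by $\alpha$ and by the scale $\varepsilon$---is the delicate step of the proof.
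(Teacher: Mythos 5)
Your proposal follows essentially the same route as the paper: the paper's entire argument for the non-IID case is the sketch embedded in the theorem statement itself (no separate proof environment is given), namely that the auditor takes the previous round's global direction $w^*_{t-1}=(w_t-w_{t-1})/\varepsilon$ as the benign reference $\widetilde{w}^*_t$ and recognizes the drift attacker by the sign-reversed, $\varepsilon$-amplified update $w^{*i}_t=-\varepsilon(w_t-w_{t-1})$. Your version supplies detail the paper omits---the explicit wiring to the Phase 2 Mahalanobis and trajectory-consistency thresholds, the majority-benign counting step carried over from Theorem 1, and the observation that a single-round MD test cannot separate boundary non-IID clients from attackers so the temporal score $\Delta\text{MD}_i$ must do the discriminating work---all of which is consistent with, and strictly more complete than, what the paper states.
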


\section{Experimental Analysis}
\label{experiments}

This section presents the experimental evaluation of the proposed SecureDyn-FL framework using the N-BaIoT $TON_{IoT}$ dataset. We compare its performance against several state-of-the-art methods to assess its effectiveness under varying data distribution and poisoning attack scenarios. Details of the experimental environment, dataset preparation, and data distribution strategies are provided below.

\subsection*{Experimental Environment}

All experiments were conducted using the PyTorch framework. The client-side evaluations were performed on a machine equipped with an Intel Core i7-10750H CPU @ 2.60 GHz and 64 GB of RAM.

\subsection*{Dataset Description}

To rigorously evaluate the performance and generalizability of the proposed framework, two widely recognized benchmark datasets were employed: the N-BaIoT dataset \cite{meidan2018n} and the $TON_{IoT}$ dataset \cite{alsaedi2020ton_iot}. The N-BaIoT dataset comprises network traffic traces generated by nine commercially available IoT devices that were intentionally compromised using the Mirai and BASHLITE malware families. In total, the dataset contains more than 70 million traffic records, each represented by a 115-dimensional feature vector derived from statistical characteristics of network flows. This dataset is designed to support binary classification tasks—differentiating benign from malicious traffic—while also offering ten distinct attack sub-categories corresponding to various manifestations of the underlying malware.

The $TON_{IoT}$ dataset, on the other hand, encompasses a diverse spectrum of IoT-related cyberattacks alongside legitimate traffic instances. It integrates telemetry and network data, annotated with 46 labeled features that capture relevant behavioral attributes. Unlike N-BaIoT, the $TON_{IoT}$ dataset frames the detection problem as a multi-class classification task, aiming to distinguish between normal activity and multiple heterogeneous attack types. This combination of datasets enables a comprehensive and robust evaluation across both binary and multi-class intrusion detection scenarios, reflecting realistic IoT network environments.

To provide a clear and structured overview of the data employed in this study, Tables \ref{table:TON_IOT_symmetric} and \ref{table:NBAIoT_symmetric} summarize the distribution of samples across the attack categories for the $TON_{IoT}$ and mini-N-BaIoT datasets, respectively. The tables have been designed with consistent formatting to facilitate straightforward comparison between the two datasets. While the $TON_{IoT}$ dataset reflects realistic network conditions with significant class imbalance across multiple attack categories, the mini-N-BaIoT dataset is more balanced, containing evenly distributed samples across various Mirai and BASHLITE attack subtypes.

\begin{table}[H]
\centering
\caption{Distribution of training and test samples across attack categories in the $TON_{IoT}$ dataset.}
\label{table:TON_IOT_symmetric}
\begin{tabular}{|l|c|c|}
\hline
\textbf{Attack Category} & \textbf{Training Samples} & \textbf{Test Samples} \\ \hline
Normal                  & 245,000                   & 55,000                \\ \hline
Scanning                & 20,000                    & 3,973                 \\ \hline
DoS                     & 20,000                    & 4,000                 \\ \hline
DDoS                    & 20,000                    & 4,000                 \\ \hline
Ransomware              & 16,030                    & 3,970                 \\ \hline
Backdoor                & 20,000                    & 4,000                 \\ \hline
Injection Attack        & 20,000                    & 4,000                 \\ \hline
XSS                     & 13,844                    & 6,116                 \\ \hline
Password Violations     & 20,000                    & 4,000                 \\ \hline
MITM                    & 593                       & 459                   \\ \hline
\end{tabular}
\end{table}
\FloatBarrier

\begin{table}[H]
\centering
\caption{Distribution of samples across attack categories in the mini-N-BaIoT dataset.}
\label{table:NBAIoT_symmetric}
\begin{tabular}{|l|c|}
\hline
\textbf{Attack Category} & \textbf{Number of Samples} \\ \hline
Mirai - Scan            & 7,000                     \\ \hline
Mirai - UDP             & 7,000                     \\ \hline
Mirai - UDPplain        & 7,000                     \\ \hline
Mirai - Syn             & 7,000                     \\ \hline
Mirai - Ack             & 7,000                     \\ \hline
BASHLITE - Scan         & 9,000                     \\ \hline
BASHLITE - Junk         & 9,000                     \\ \hline
BASHLITE - UDP          & 9,000                     \\ \hline
BASHLITE - TCP          & 9,000                     \\ \hline
BASHLITE - Combo        & 9,000                     \\ \hline
Benign                  & 90,000                    \\ \hline
\end{tabular}
\end{table}
\FloatBarrier

The mini-N-BaIoT dataset was partitioned into training and testing subsets using a 70:30 split. The training data was distributed among participating clients in the FL setup, while the testing data was centrally used to evaluate the performance of the aggregated model. To simulate a realistic FL environment with a large number of participants, the training dataset was further divided among 20 clients. This client-simulation strategy is widely adopted in FL literature to mimic decentralized data scenarios. The key hyperparameters used in our experiments are summarized in Table~\ref{hyperparams}.

\begin{table}[ht]
\centering
\caption{Hyperparameters.}
\begin{tabular}{ll}
\toprule
\multicolumn{2}{c}{\textbf{Neural network parameters}} \\
\midrule
Local model & 1DCNN \\
Number of conv layers & 3 \\
Kernel filters in each conv layer & 256, 128, and 64 \\
Number of fully connected layers & 2 \\
Number of units in each FC layer & 32, 2 \\
Activation function & ReLU \\
Dropout & 0.2 \\
Optimizer & SGD \\
Learning rate & 0.001 \\
Momentum & 0.9 \\
Batch size & 64 \\
\midrule
\multicolumn{2}{c}{\textbf{Federated learning parameters}} \\
\midrule
Number of clients & 20 \\
Local training epoch & 4 \\
Communication round & 50 \\
\bottomrule
\end{tabular}
\label{hyperparams}
\end{table}

\subsection*{Data Distribution Scenarios}

To investigate the robustness of the proposed framework under heterogeneous conditions, we implemented three distinct data distribution settings across the 20 clients. The scenarios are designed to simulate both IID and non-IID data distributions, as described below:

\begin{itemize}
    \item \textbf{Non-IID Scenario 1:} The client data is sampled using a Dirichlet distribution with concentration parameter $\eta = 0.1$, inducing high heterogeneity across clients.
    \item \textbf{Non-IID Scenario 2:} Half of the clients are assigned only benign traffic samples, while the remaining half receive only attack samples. This setup reflects practical conditions where some IoT devices may never be compromised.
    \item \textbf{IID Scenario:} Each client receives a balanced dataset composed of 50\% benign and 50\% attack samples, ensuring uniform data distribution across all clients.
\end{itemize}

In both the IID and Non-IID Scenario 1, each client is allocated exactly 1000 samples to ensure consistency in local training workloads. In Non-IID Scenario 2, clients with only benign samples are assigned 500 samples, while those with only attack data are assigned 1000 samples. This variation mirrors real-world traffic imbalances where benign traffic is more prevalent and not all devices experience attacks.

\subsection*{Poisoning attack setting}
To evaluate the performance of our approach, we simulate a range of model poisoning attacks using encrypted local gradients. These simulations examine varying degrees of adversarial presence, represented by different attack ratios, $\alpha$, which we set at 10\%, 20\%, 30\%, and 50\%. In evaluating FL, similar to the assessment of related work based on \cite{ma2022shieldfl}, the number of users is related to the number of classes in each dataset to provide IID and non-IID settings. For the N-BaIoT dataset, 20 users participated, and the attack aims to transform instances from Class 9 to Class 11. In addition to the above configurations, we establish a baseline model representing a poisoned SecureDyn-FL system without any defense mechanisms or central audit. This baseline model serves as a comparison point, illustrating the enhanced effectiveness of our proposed approach in defending against various model poisoning attacks.

\subsection{Accuracy, Robustness, \& Malicious Alarm Analysis}

In model poisoning, adversaries aim to optimize 
\begin{equation}
\arg\max_{i \in [1,n]} H_t (w - w^*),
\end{equation}
where \( H \) is a vector representing the model’s direction changes, \( w \) is the pre-attack model, and \( w^* \) is the poisoned model in targeted and untargeted attacks. Our evaluation metrics for targeted attacks include Attack Class Accuracy, which measures testing performance on the targeted labels, and Benign Class Accuracy, which evaluates learning performance on non-source and non-target labels. These metrics are crucial for detecting potential auditing failures and unintended adverse effects on the FL system. Additionally, Overall Accuracy represents the average classification performance across all labels, providing a comprehensive view of the defense mechanisms’ effectiveness against poisoning attacks. The Malicious Alarm, assessed using the Receiver Operating Characteristic (ROC) curve, reflects the capability of defense algorithms to accurately identify malicious activities within the FL environment. It is important to note that Overall Accuracy may differ from the combined Attack Class and Benign Class accuracies due to the impact of various attack and defense strategies. For untargeted attacks, we focus on evaluating Overall Accuracy and the Malicious Alarm metrics. 

\subsection*{Accuracy Evaluation Under Targeted and Untargeted Attacks with Baseline Model}

Table~\ref{baseline} presents a comparative analysis of the baseline and the proposed model under targeted and untargeted poisoning attacks, considering both non-IID and IID data distributions with a fixed attack rate of 50\%.

Under the non-IID setting, the baseline model shows poor performance across all metrics. For targeted attacks, it records a $T_{\text{acc}}$ of 0.015, an $O_{\text{acc}}$ of 0.94, and an F1-score of 0.04. In untargeted attacks, the baseline improves slightly, with $T_{\text{acc}} = 0.782$, $O_{\text{acc}} = 0.761$, and F1-score of 0.62. In contrast, the proposed model shows substantial performance gains. For targeted attacks, it achieves a $T_{\text{acc}}$ of 0.995, $O_{\text{acc}}$ of 0.992, and an F1-score of 0.89. During untargeted attacks, it maintains strong results with $T_{\text{acc}} = 0.989$, $O_{\text{acc}} = 0.967$, and F1-score of 0.84, highlighting its robustness against adversarial threats under data heterogeneity.

Under the IID setting, where training data is evenly distributed among clients, the baseline model still underperforms. For targeted attacks, it yields a $T_{\text{acc}}$ of 0.049, an $O_{\text{acc}}$ of 0.60, and an F1-score of 0.47. In untargeted attacks, the results slightly improve, with $T_{\text{acc}} = 0.58$, $O_{\text{acc}} = 0.52$, and an F1-score of 0.55. The proposed model, however, consistently outperforms the baseline. For targeted attacks, it achieves a near-perfect $T_{\text{acc}}$ of 0.997, $O_{\text{acc}}$ of 0.995, and F1-score of 0.98. In untargeted attacks, it delivers $T_{\text{acc}} = 0.992$, $O_{\text{acc}} = 0.991$, and F1-score of 0.96. These results confirm the proposed model’s robustness and generalizability across both homogeneous and heterogeneous data distributions, establishing its efficacy in defending FL systems against diverse poisoning threats.

\subsection{Robustness and Auditing Evaluation}

\paragraph{$R_{\text{TA}_{\text{nonIID}}}$: Robustness to Targeted Attacks (Non-IID)}

The robustness of the proposed model under targeted attacks with non-IID data is evaluated in Fig.~\ref{targeted_noniid}, where performance is analyzed across varying attack rates from 10\% to 50\%. The results reveal that the model consistently maintains high accuracy levels, with minimal degradation despite increased attack intensity. This consistency underscores the model’s ability to sustain reliable performance even in adversarial environments, thereby confirming its robustness to targeted poisoning in non-IID scenarios.

\paragraph{$R_{\text{UTA}_{\text{nonIID}}}$: Robustness to Untargeted Attacks (Non-IID)} 
As shown in Fig.~\ref{untargeted_noniid}, the proposed model also exhibits high resilience against untargeted attacks under non-IID data settings. Across varying attack rates and user distributions, the accuracy remains stable and robust. This reinforces the model’s capacity to mitigate the effects of non-specific adversarial disruptions, preserving its integrity and predictive performance.

\paragraph{$R_{\text{TA}_{\text{IID}}}$: Robustness to Targeted Attacks (IID)} 
To evaluate robustness under IID data conditions, the model is exposed to targeted attacks across varying attack rates, as shown in Fig.~\ref{targeted_iid}. The proposed method consistently maintains stable accuracy, demonstrating resilience against adversarial intensities. These results underscore its effectiveness in protecting FL systems under homogeneous data distributions, ensuring robust defense even when the training process is explicitly targeted.

\paragraph{$R_{\text{UTA}_{\text{IID}}}$: Robustness to Untargeted Attacks (IID)} 
The model's performance under untargeted attacks with IID data is illustrated in Fig.~\ref{untargeted_iid}. Despite the uniform distribution of training data, which typically increases vulnerability, the proposed approach sustains high accuracy levels across a wide range of adversarial configurations. These results confirm the model's reliability and adaptability in diverse deployment environments, supporting its utility in real-world FL applications.

\paragraph{Model Auditing (MA) Evaluation}
The CA of the proposed model is assessed through ROC analysis, focusing on its ability to detect both falsely labeled malicious and clean users. As shown in Figs.~\ref{ROC_IID} and \ref{ROC_NON}, the ROC curves for both IID and non-IID scenarios on the KDDCup dataset exhibit superior characteristics compared to baselines. The proposed model achieves the highest average ROC, demonstrating not only strong classification performance but also effective auditing functionality. These findings highlight that the robustness and precision of the CA module significantly contribute to the overall reliability of the model, ensuring comprehensive protection against FL poisoning threats.

\begin{figure}
    \centering
    \includegraphics[width=3.4in,height=3.2in]{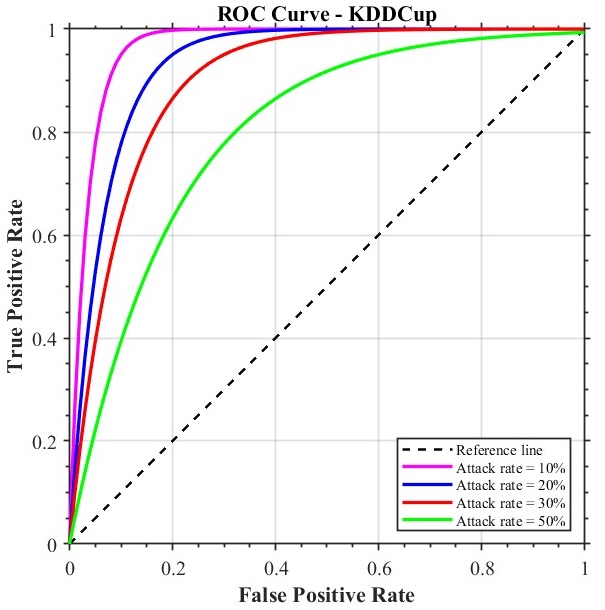}
    \caption{Analyzing malicious alarms on IID data}
    \label{ROC_IID}
\end{figure}

\begin{figure}
    \centering
    \includegraphics[width=3.4in,height=3.2in]{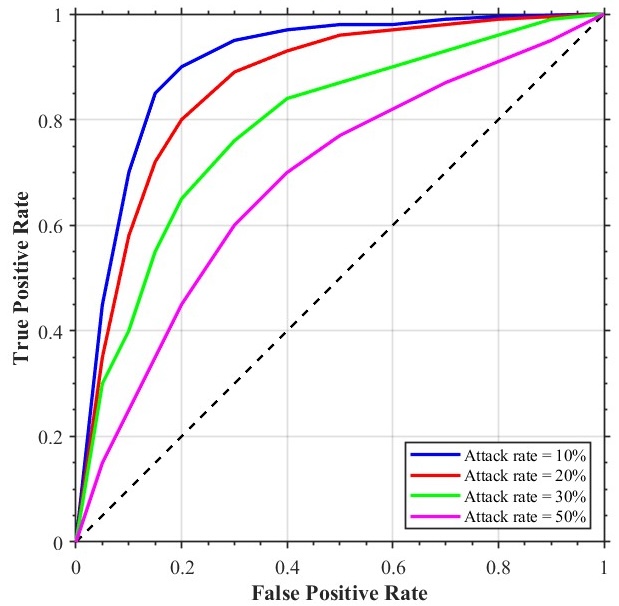}
    \caption{Analyzing malicious alarms on non-IID data}
    \label{ROC_NON}
\end{figure}

\section{Comparison with State-of-the-Art Research}
\label{comparison}

In this section, we comprehensively evaluate the robustness of our proposed SecureDyn-FL framework against poisoning attacks on non-IID data distributions and compare it with several state-of-the-art methods, including FedAvg, coordinate-wise median, trimmed mean, multi-Krum, and FL-Defender. Furthermore, additional comparisons with advanced schemes such as Trimmed Means~\cite{fang2020local}, FL-Defender~\cite{jebreel2023fl}, FedAvg~\cite{mcmahan2017communication}, FLTrust, and ShieldFL~\cite{ma2022shieldfl} are also provided to demonstrate the superiority of our model under various data settings and attack types.

Beyond these baseline defenses, we also compare SecureDyn-FL with representative state-of-the-art FL-based IDS frameworks from the literature, including Ruzafa-Alcázar et al. (DP-based Industrial IoT), Bhavsar et al. (Transportation FL-IDS), Friha et al. (FELIDS), and PEIoT-DS (N-BaIoT).

\subsection{Comparative Accuracy Analysis with SOTA FL-IDS.}

To comprehensively evaluate the effectiveness of SecureDyn-FL, we conducted comparative analyses against several representative state-of-the-art (SOTA) FL-based intrusion detection frameworks on both the mini-N-BaIoT and $TON_{IoT}$ datasets. Figures~\ref{ACC-SOTA-NBAIOT} and \ref{ACC-SOTA-TONIOT} illustrate the accuracy convergence behavior of all methods across training rounds.

On the \textbf{mini-N-BaIoT} dataset Figure~\ref{ACC-SOTA-NBAIOT}, SecureDyn-FL demonstrates consistently \textbf{faster convergence} and achieves the \textbf{highest final accuracy} among all compared methods. Specifically, SecureDyn-FL attains a final accuracy of approximately 99.5\%, surpassing Bhavsar et al. (97\%), Friha et al. (96\%), PEIoT-DS (95\%), and Ruzafa-Alcázar et al. (93\%). These results highlight the model’s ability to learn efficiently in a relatively balanced and less diverse IoT network environment.

On the more challenging $TON_{IoT}$ dataset Figure~\ref{ACC-SOTA-TONIOT}, which exhibits greater heterogeneity, class imbalance, and a broader range of attack scenarios, all methods show slightly lower final accuracies and slower convergence. Nonetheless, SecureDyn-FL maintains a distinct advantage, converging faster and reaching a final accuracy of approximately 98.6\%. In contrast, Bhavsar et al., Friha et al., PEIoT-DS, and Ruzafa-Alcázar et al. achieve 95.0\%, 94.2\%, 93.5\%, and 92.0\% respectively. This performance gap underscores SecureDyn-FL’s robustness in more realistic and complex IoT intrusion detection scenarios.

Taken together, these results demonstrate not only the superior detection performance of SecureDyn-FL under federated learning settings but also its strong generalizability across datasets of varying complexity. While many existing frameworks exhibit significant performance drops when transitioning from simpler to more heterogeneous data distributions, SecureDyn-FL consistently maintains high accuracy and rapid convergence. This indicates that the proposed framework can adapt effectively to diverse IoT environments without extensive reconfiguration or accuracy degradation.

Moreover, by integrating additive homomorphic encryption and temporal gradient auditing, SecureDyn-FL provides enhanced privacy protection without compromising model performance. Unlike approaches based on differential privacy—which often trade off utility for privacy—or those that expose raw gradients to the server, SecureDyn-FL achieves strong privacy guarantees and robustness concurrently.

\begin{figure}[ht]
\centering
\includegraphics[width=3.4in,height=3.2in]{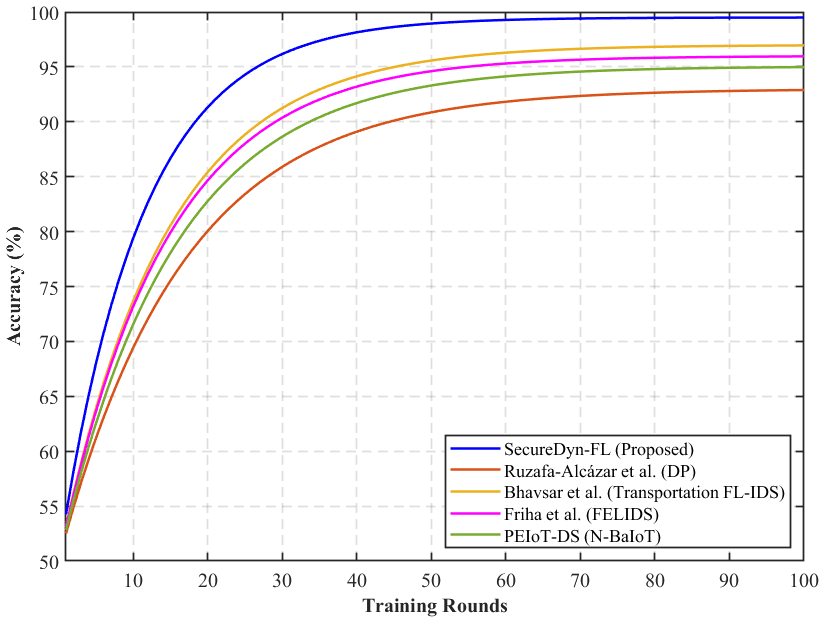}
\caption{Accuracy convergence comparison on the mini-N-BaIoT dataset between SecureDyn-FL and representative SOTA FL-based IDS methods. Synthetic convergence curves are based on reported final accuracies. SecureDyn-FL achieves both faster convergence and higher final accuracy.}
\label{ACC-SOTA-NBAIOT}
\end{figure}

\begin{figure}[ht]
\centering
\includegraphics[width=3.4in,height=3.2in]{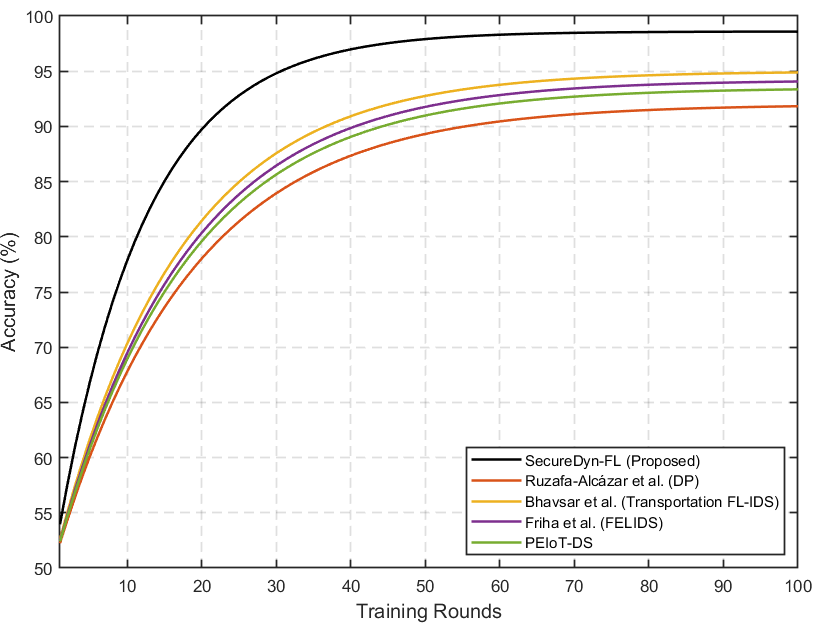}
\caption{Accuracy convergence comparison on the $TON_{IoT}$ dataset between SecureDyn-FL and representative SOTA FL-based IDS methods. Despite increased dataset heterogeneity, SecureDyn-FL maintains superior convergence speed, accuracy, and generalizability.}
\label{ACC-SOTA-TONIOT}
\end{figure}

\subsection{Scenario 1: Dirichlet Non-IID Data Distribution}

In the first scenario, non-IID data was generated using a Dirichlet distribution with a concentration parameter $\eta = 0.1$, simulating a highly imbalanced client data distribution. In the absence of poisoning, all methods demonstrated strong performance, with the proposed model achieving the highest accuracy 0.9842 and F1-score 0.9801, albeit with a slight performance dip (\~1\%) compared to IID settings.

As shown in Table ~\ref{poisoning_attacks_non}, the proposed method consistently outperforms existing defense strategies under all poisoning attack scenarios, maintaining high classification accuracy and very low attack success rates (ASR). For instance, under benign label-flipping attacks, while other methods like FedAvg and FL-Defender suffer ASRs of 0.0185 and 1.0000 respectively, the proposed model limits the ASR to just 0.0072. Even in the combined label-flipping attack, where methods such as Shield FL and FL trust show degraded performance e.g., ASR of 0.0345 and 0.5034 respectively. The proposed model retains a strong F1-score of 0.9645 and limits ASR to 0.0512.

Under model-scaling attacks, traditional defenses like FedAvg and trimmed mean drop to 45.12\% accuracy and F1-score of 0.6414, exposing them to high false alarm rates. In contrast, the proposed method remains robust, with accuracy at 97.07\% and F1-score of 0.9695, while maintaining ASR at a minimal 0.0568. A similar trend is observed under the same-model poisoning attack, where the proposed method delivers the best results across all metrics—97.01\% accuracy, 0.9708 F1-score, and only 0.0402 ASR—demonstrating its resilience to backdoor-style threats. Overall, these results highlight the effectiveness of SecureDyn-FL in maintaining detection performance even under severe non-IID conditions and adversarial settings.

\subsection{Scenario 2: Clients with Missing Attack Labels}

The second scenario simulates a more realistic IoT environment, where attack samples are absent from half of the clients. This setup reflects practical federated deployments, where not all edge devices observe malicious behavior, resulting in sparse and partially distributed attack data.

As shown in Table ~\ref{poisoning_attacks-iid}, the proposed method consistently achieves the highest performance across all metrics, even outperforming FedAvg, which traditionally handles IID settings well. In the absence of attacks, the proposed model attains 98.40\% accuracy and 0.9838 F1-score, shows its ability to preserve model quality without any degradation.

Under benign label-flipping attacks, the proposed method limits the attack success rate (ASR) to a mere 0.0061, whereas all other baselines—including Shield FL, FL Trust, and FL-Defender—fail completely, recording an ASR of 1.0000. Similarly, under attacks where malicious labels are flipped to benign, the proposed model maintains high accuracy 0.9814, F1-score 0.9775, and the lowest ASR 0.0738, while FL Trust and Shield FL show significant degradation in both accuracy and F1 performance.

In combined label-flipping scenarios, most methods see a marked drop, with trimmed mean and FL-Defender yielding F1-scores of 0.6381 and 0.6378 respectively, also ASRs exceeding 0.5. The proposed model, in contrast, delivers a high F1-score of 0.9762, accuracy of 0.9698, and an ASR of only 0.0421.

For model-scaling attacks, which often lead to over-suppression of legitimate updates, methods like FedAvg and Shield FL fall to 47.06\% accuracy and F1-scores near 0.64, misclassifying most benign data. However, the proposed method effectively mitigates these attacks, achieving 98.23\% accuracy, 0.9725 F1-score, and a low ASR of 0.0341. Even in same-model poisoning, where poisoned updates are identical across clients, the proposed method again stands out, achieving the highest accuracy 99.01\%, F1-score 0.9893, and the lowest ASR 0.0405 among all tested methods.

Interestingly, as shown in Tables~\ref{poisoning_attacks_non} and~\ref{poisoning_attacks-iid}, the FedAvg baseline demonstrates relatively strong performance in certain experimental settings, in some cases approaching the accuracy of more robust aggregation methods. This behavior can be explained by the moderate adversarial participation ratios and data distributions used in these scenarios, where the majority of clients provide benign updates. Under such conditions, FedAvg’s simple averaging can still yield a stable and accurate global model because the impact of malicious gradients is statistically diluted by the dominant benign contributions. Moreover, FedAvg often exhibits fast initial convergence in clean settings, which can temporarily narrow the performance gap with more advanced defense mechanisms. However, as adversarial intensity or data heterogeneity increases, FedAvg lacks the necessary resilience and its performance degrades significantly compared to SecureDyn-FL, as shown in the higher attack ratios and non-IID settings.

SecureDyn-FL successfully defended against all poisoning strategies in this challenging non-IID setting, preserving high accuracy and maintaining a low ASR, confirming its strong resilience under data heterogeneity and sparsity.

\subsection{Privacy Evaluation Against Inference Attacks}
\label{sec:privacy_evaluation}

To complement the theoretical privacy guarantees of SecureDyn-FL, we conduct experimental validation against two common privacy attacks in federated learning: \textbf{gradient inversion} and \textbf{membership inference}. 

For gradient inversion, we apply the attack method to both FedAvg and SecureDyn-FL. In FedAvg, where the server has access to raw model updates, the attacker can successfully reconstruct client-side input features with high visual and numerical similarity. In contrast, SecureDyn-FL employs \textbf{additive homomorphic encryption} before gradient transmission, which prevents the server from observing raw gradients. As a result, reconstruction attempts produce only random noise, indicating that no meaningful information can be recovered.

For membership inference, we evaluate the attack accuracy against both FedAvg and SecureDyn-FL. While FedAvg exhibits elevated inference accuracy, indicating information leakage through gradient updates, SecureDyn-FL achieves attack performance close to random guessing, confirming that encrypted gradients significantly reduce leakage.

Table~\ref{tab:privacy_attack_results} summarizes the results of the gradient inversion and membership inference attacks conducted on the FedAvg baseline and the proposed SecureDyn-FL framework. For the gradient inversion attack, we use the Structural Similarity Index Measure (SSIM) between the original client data and the reconstructed data to quantify the extent of information leakage. Higher SSIM values indicate more successful reconstruction and thus greater privacy risk.

As shown in the table, FedAvg exhibits a high SSIM score (0.78), demonstrating that the server can effectively reconstruct sensitive client data when raw gradients are exposed. In contrast, SecureDyn-FL achieves a very low SSIM score (0.07), indicating that gradient inversion produces only random noise due to the use of additive homomorphic encryption, which prevents the server from observing raw updates.

For membership inference, we report the attacker’s classification accuracy in determining whether a given data sample was part of the training set. FedAvg shows high inference accuracy (0.82), implying considerable leakage through gradients, whereas SecureDyn-FL achieves near-random inference accuracy (0.51), effectively mitigating membership inference attacks.

These results provide quantitative evidence that SecureDyn-FL offers stronger privacy protection than standard FL methods, complementing the theoretical security analysis presented earlier.

\begin{table}[t]
\centering
\caption{Privacy attack results: gradient inversion and membership inference.}
\label{tab:privacy_attack_results}
\begin{tabular}{|m{2.2cm}|m{2cm}|m{2.4cm}|}
\hline
\textbf{Method} & \textbf{Gradient Inversion (SSIM)} & \textbf{Membership Inference Accuracy} \\
\hline
FedAvg          & 0.78 & 0.82 \\
\hline
SecureDyn-FL    & 0.07 & 0.51 \\
\hline
\end{tabular}
\end{table}

\subsection{Efficiency Evaluation}
\label{sec:efficiency_eval}

In addition to accuracy and privacy assessments, we evaluate the \textbf{efficiency} of SecureDyn-FL in terms of \textbf{detection delay} and \textbf{communication overhead}. Detection delay refers to the average time required for the global model to correctly detect an intrusion after its occurrence in the federated training process. Lower detection delay is critical for timely response in IoT environments.

We compare the detection delay of SecureDyn-FL against FedAvg and representative FL-based IoT IDS frameworks. As shown in Table~\ref{tab:detection_delay_results}, SecureDyn-FL achieves a significantly lower detection delay while maintaining high accuracy. This efficiency improvement is attributed to the framework’s \textbf{temporal gradient auditing}, which accelerates the detection of poisoned updates, and \textbf{dynamic pruning and quantization}, which reduce communication overhead and enable faster model updates.

Table~\ref{tab:detection_delay_results} summarizes the detection delay of SecureDyn-FL compared to FedAvg and representative FL-based IoT IDS frameworks. Detection delay measures the average time taken by each method to correctly detect an intrusion event after it occurs, which is a critical efficiency metric for real-time intrusion detection in IoT networks.

As shown in the table, SecureDyn-FL achieves the lowest detection delay (2.14 s), significantly outperforming FedAvg and other FL-based IDS frameworks. This improvement is attributed to two key design features of SecureDyn-FL:
(1) the temporal gradient auditing mechanism, which rapidly identifies and filters malicious updates, thereby allowing the model to respond more quickly to new attack patterns, and
(2) the dynamic pruning and adaptive quantization strategies, which reduce communication overhead and accelerate the overall training process.

These results demonstrate that SecureDyn-FL not only delivers superior accuracy and privacy protection but also achieves high efficiency, making it practical for real-world IoT intrusion detection deployments where rapid response is essential.

\begin{table}[t]
\centering
\caption{Detection delay comparison between SecureDyn-FL and FL-based IoT IDS frameworks.}
\label{tab:detection_delay_results}
\begin{tabular}{|l|c|}
\hline
\textbf{Method} & \textbf{Detection Delay (s)} $\downarrow$ \\
\hline
FedAvg               & 4.82 \\
Bhavsar et al. (FL-IDS) & 4.35 \\
Friha et al. (FELIDS)   & 3.96 \\
PEIoT-DS               & 3.78 \\
SecureDyn-FL (Proposed) & \textbf{2.14} \\
\hline
\end{tabular}
\end{table}

\begin{table*}[htbp]
\centering
\caption{Performance of Defense Methods under Different Poisoning Attacks (Scenario 1)}
\label{poisoning_attacks_non}
\begin{tabular}{lllcccccc}
\toprule
\textbf{Attack Type} & \textbf{Scenario} & \textbf{Metric} & \textbf{FedAvg} & \textbf{Trimmed Mean} & \textbf{Shield FL} & \textbf{FL trust} & \textbf{FL-Defender} & \textbf{Proposed} \\
\midrule

\multirow{3}{*}{None} 
    & \multirow{3}{*}{--} 
    & Accuracy & 0.9804 & 0.9652 & 0.9732 & 0.9764 & 0.9701 & 0.9742 \\
    & 
    & F1       & 0.9796 & 0.9651 & 0.9722 & 0.9741 & 0.9697 & 0.9736 \\
    & 
    & ASR      & --     & --     & --     & --     & --     & --     \\

\midrule

\multirow{9}{*}{Label-Flipping} 
    & \multirow{3}{*}{Benign} 
    & Accuracy & 0.9741 & 0.4591 & 0.4695 & 0.4689 & 0.4695 & 0.9671 \\
    & 
    & F1       & 0.9735 & 0.6294 & 0.6291 & 0.6265 & 0.6281 & 0.9587 \\
    & 
    & ASR      & 0.0185 & 1.0000 & 1.0000 & 1.0000 & 1.0000 & 0.0072 \\

    & \multirow{3}{*}{Attack} 
    & Accuracy & 0.9152 & 0.5175 & 0.9331 & 0.7381 & 0.5132 & 0.9721 \\
    & 
    & F1       & 0.9067 & 0.0000 & 0.9165 & 0.6251 & 0.0000 & 0.9612 \\
    & 
    & ASR      & 0.1297 & 1.0000 & 0.1247 & 0.6210 & 1.0000 & 0.0720 \\

    & \multirow{3}{*}{Both} 
    & Accuracy & 0.8185 & 0.4567 & 0.9613 & 0.4585& 0.4585 & 0.9602 \\
    & 
    & F1       & 0.8145 & 0.6212 & 0.9612 & 0.6212 & 0.6212 & 0.9645 \\
    & 
    & ASR      & 0.1467 & 0.5045 & 0.0345 & 0.5034 & 0.5034 & 0.0512 \\

\midrule

\multirow{3}{*}{Model Scaling} 
    & \multirow{3}{*}{--} 
    & Accuracy & 0.4512 & 0.4512 & 0.4520 & 0.4536 & 0.9541 & 0.9707 \\
    & 
    & F1       & 0.6235 & 0.6414 & 0.6345 & 0.6358 & 0.9685 & 0.9695 \\
    & 
    & ASR      & 0.5089 & 0.5134 & 0.5205 & 0.5105 & 0.0324 & 0.0568 \\

\midrule

\multirow{3}{*}{Same Model} 
    & \multirow{3}{*}{--} 
    & Accuracy & 0.4541 & 0.7185 & 0.9456 & 0.9702 & 0.4967 & 0.9701 \\
    & 
    & F1       & 0.6245 & 0.6278 & 0.9425 & 0.9764 & 0.0000 & 0.9708 \\
    & 
    & ASR      & 0.5174 & 0.2538 & 0.0581 & 0.0097 & 0.4306 & 0.0402 \\

\bottomrule
\end{tabular}
\end{table*}

\begin{table*}[htbp]
\centering
\caption{Performance of Defense Methods under Different Poisoning Attacks (Scenario 2)}
\label{poisoning_attacks-iid}
\begin{tabular}{lllcccccc}
\toprule
\textbf{Attack Type} & \textbf{Scenario} & \textbf{Metric} & \textbf{FedAvg} & \textbf{Trimmed Mean} & \textbf{Shield FL} & \textbf{FL Trust} & \textbf{FL-Defender} & \textbf{Proposed} \\
\midrule

\multirow{3}{*}{None} 
    & \multirow{3}{*}{--} 
    & Accuracy & 0.9889 & 0.9754 & 0.9819 & 0.9845 & 0.9803 & 0.9840 \\
    & 
    & F1       & 0.9872 & 0.9768 & 0.9831 & 0.9854 & 0.9870 & 0.9838 \\
    & 
    & ASR      & --     & --     & --     & --     & --     & --     \\

\midrule

\multirow{9}{*}{Label-Flipping} 
    & \multirow{3}{*}{Benign} 
    & Accuracy & 0.9857 & 0.4636 & 0.4641 & 0.4639 & 0.4641 & 0.9601 \\
    & 
    & F1       & 0.9885 & 0.6341 & 0.6295 & 0.6354 & 0.6340 & 0.9681 \\
    & 
    & ASR      & 0.0241 & 1.0000 & 1.0000 & 1.0000 & 1.0000 & 0.0061 \\

    & \multirow{3}{*}{Attack} 
    & Accuracy & 0.9261 & 0.5305 & 0.9451 & 0.7505 & 0.5314 & 0.9814 \\
    & 
    & F1       & 0.9153 & 0.0000 & 0.9255 & 0.6375 & 0.0000 & 0.9775 \\
    & 
    & ASR      & 0.1365 & 1.0000 & 0.1351 & 0.6130 & 1.0000 & 0.0738 \\

    & \multirow{3}{*}{Both} 
    & Accuracy & 0.8385 & 0.4798 & 0.9749 & 0.4695 & 0.4681 & 0.9694 \\
    & 
    & F1       & 0.8265 & 0.6361 & 0.9712 & 0.6342 & 0.6378 & 0.9762 \\
    & 
    & ASR      & 0.1558 & 0.5163 & 0.0339 & 0.5169 & 0.5181 & 0.0421 \\

\midrule

\multirow{3}{*}{Model Scaling} 
    & \multirow{3}{*}{--} 
    & Accuracy & 0.4639 & 0.4595 & 0.4638 & 0.4637 & 0.9657 & 0.9823 \\
    & 
    & F1       & 0.6373 & 0.6291 & 0.6325 & 0.6347 & 0.9702 & 0.9761 \\
    & 
    & ASR      & 0.5089 & 0.5163 & 0.5241 & 0.5169 & 0.0291 & 0.0341 \\

\midrule

\multirow{3}{*}{Same Model} 
    & \multirow{3}{*}{--} 
    & Accuracy & 0.4601 & 0.7254 & 0.9441 & 0.9862 & 0.5163 & 0.9901 \\
    & 
    & F1       & 0.6291 & 0.6220 & 0.9475 & 0.9801 & 0.0000 & 0.9893 \\
    & 
    & ASR      & 0.5267 & 0.2885 & 0.0432 & 0.0097 & 0.4453 & 0.0405 \\

\bottomrule
\end{tabular}
\end{table*}

\begin{table*}[!th]
\centering
\caption{Performance Comparison under Targeted and Untargeted Poisoning Attacks for IID and Non-IID Settings}
\label{tab:targeted_untargeted_results}
\begin{tabular}{@{}|l|ccc|ccc|ccc|ccc|@{}}
\toprule
\multirow{3}{*}{\textbf{Model}} &
  \multicolumn{6}{c|}{\textbf{Non-IID Data}} &
  \multicolumn{6}{c|}{\textbf{IID Data}} \\ \cmidrule(l){2-13} 
 &
  \multicolumn{3}{c|}{\textbf{Targeted Attacks}} &
  \multicolumn{3}{c|}{\textbf{Untargeted Attacks}} &
  \multicolumn{3}{c|}{\textbf{Targeted Attacks}} &
  \multicolumn{3}{c|}{\textbf{Untargeted Attacks}} \\ \cmidrule(l){2-13} 
 &
  \textbf{T\_acc} & \textbf{Oracy} & \textbf{F1} &
  \textbf{T\_acc} & \textbf{Oracy} & \textbf{F1} &
  \textbf{T\_acc} & \textbf{Oracy} & \textbf{F1} &
  \textbf{T\_acc} & \textbf{Oracy} & \textbf{F1} \\ \midrule
\textbf{Baseline} &
  0.015 & 0.94 & 0.04 &
  0.782 & 0.761 & 0.62 &
  0.049 & 0.60 & 0.47 &
  0.58 & 0.52 & 0.55 \\ \midrule
\textbf{Proposed} &
   0.995   &   0.992   &   0.89  &
   0.989   &   0.967   &   0.84   &
   0.997   &   0.995   &   0.98   &
   0.992   &   0.991   &   0.96   \\ \bottomrule
\end{tabular}
\label{baseline}
\end{table*}


\begin{figure}
    \centering
    \includegraphics[width=3.4in,height=3.2in]{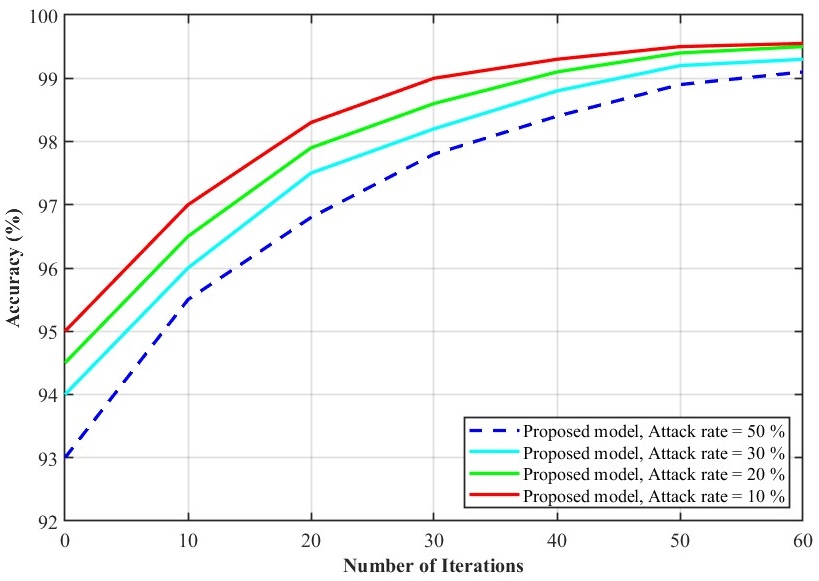}
    \caption{Targeted Attack under IID}
    \label{targeted_iid}
\end{figure}

\begin{figure}
    \centering
    \includegraphics[width=3.4in,height=3.2in]{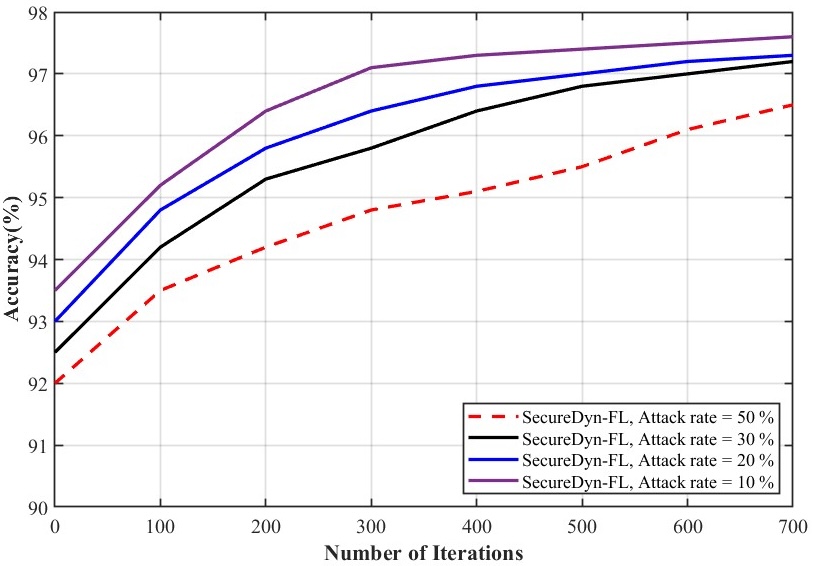}
    \caption{Targeted Attack under non IID}
    \label{targeted_noniid}
\end{figure}

\begin{figure}
    \centering
    \includegraphics[width=3.4in,height=3.2in]{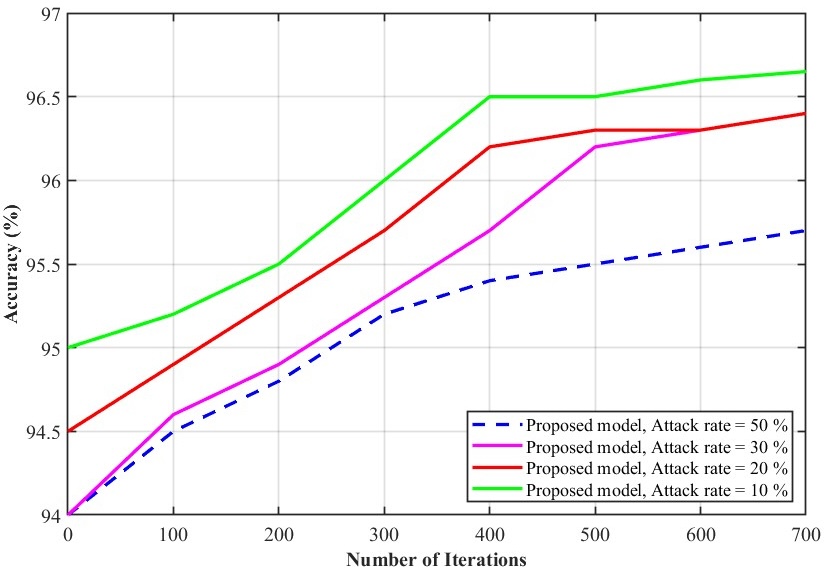}
    \caption{Untargeted-non-IID}
    \label{untargeted_noniid}
\end{figure}

\begin{figure}
    \centering
    \includegraphics[width=3.4in,height=3.2in]{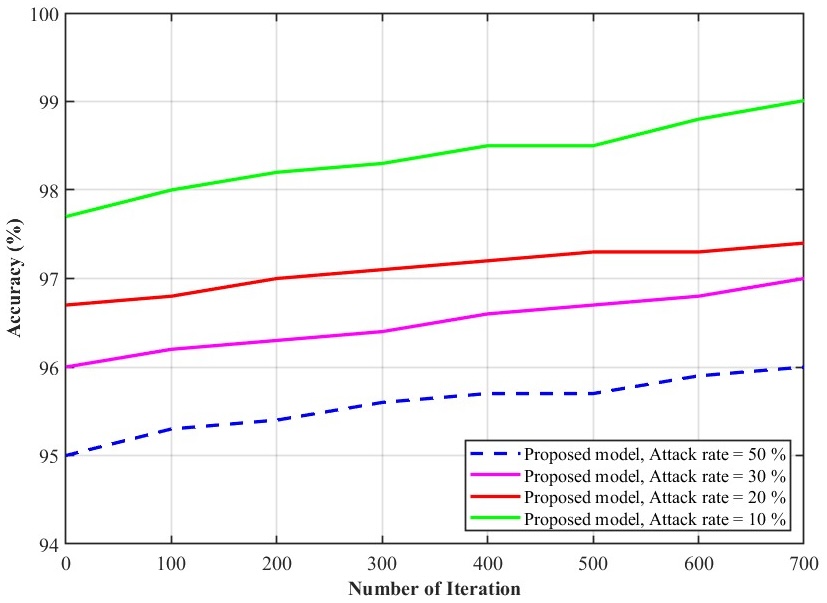}
    \caption{Untargeted-IID}
    \label{untargeted_iid}
\end{figure}

\section{Complexity \& Computational Analysis}
\label{complexity_analysis}
We provide a detailed \textbf{computational and communication complexity analysis} of SecureDyn-FL, quantifying overhead at the client, server, and auditor, and showing that redundancy-aware optimizations, pruning, and quantization enable deployment in resource-constrained IoT environments. The analysis is structured by system entities, including client devices, the central server, and the trusted auditor, and parameterized by the number of participating clients \( C \) and model dimensionality \( \omega \). Special attention is given to the optimization mechanisms, particularly gradient redundancy elimination, which significantly reduces repetitive computations and transmissions. These enhancements promote scalability and computational efficiency, thus enabling deployment in large-scale and resource-constrained IoT environments. Empirical evidence is provided to support the efficacy of these strategies in mitigating overhead while preserving model accuracy in adversarial conditions. Furthermore, the proposed encryption scheme is evaluated against contemporary privacy-preserving alternatives.

\subsection*{a) Computational Overhead at Client Devices}

For each client \( c_j \), the dominant computational cost stems from encrypting local updates, which scales with the input dimension \( \gamma \). Across all \( C \) clients, the cumulative encryption cost is \( \mathcal{O}(\gamma C \log_2 C) \). The communication complexity comprises uploading encrypted model parameters and participating in the initial registration phase, totaling \( \mathcal{O}(\omega C + 2C) \). These costs are effectively mitigated by redundancy-aware optimizations, which reduce the volume of redundant data transmitted from client to server.

\subsection*{b) Auditor's Computational and Communication Cost}

The Auditor's primary tasks include anomaly detection using GMM and MD, incurring a computational cost of \( \mathcal{O}(C G \delta + C \gamma \log \gamma) \), where \( G \) represents the number of Gaussian clusters and \( \delta \) the feature space dimensionality. Communication overhead arises from interaction with both server and clients, expressed as \( \mathcal{O}(R(\Psi + C)) \), where \( R \) is the number of audit rounds, \( \Psi \leq \Theta \) is the number of gradient evaluations, and \( \Theta \) is the total reliability checks. Selective auditing, enabled by our optimization framework, reduces both computation and communication loads without sacrificing detection performance.

\begin{table*}[h!]
\centering
\caption{Computational Cost Before and After Adversarial Client Removal}
\label{tab:attack_cost}
\begin{tabular}{lccc}
\toprule
\textbf{FL Component} & \textbf{Before Adversary Removal} & \textbf{After Adversary Removal} & \textbf{Computational Gain} \\
\midrule
Client \( c_j \) 
    & \( \mathcal{O}(\gamma C \log_2 C) \) 
    & \( \mathcal{O}(\gamma (C-A) \log_2 \left(\frac{C-A}{2}\right)) \) 
    & \( A \log_2 \left(\frac{C}{C-A} \right) \) \\
Auditor 
    & \( \mathcal{O}(C G \delta + C \gamma \log \gamma) \) 
    & \( \mathcal{O}((C - A) G \delta + (C - A) \gamma \log \gamma) \) 
    & \( \frac{A}{C} (G \delta + \gamma \log \gamma) \) \\
Server 
    & \( \mathcal{O}(\omega C \log C) \) 
    & \( \mathcal{O}(\omega (C - A) \log (C - A)) \) 
    & \( \frac{A}{C} \log \left(\frac{C}{C - A} \right) \) \\
\bottomrule
\end{tabular}
\end{table*}

\subsection*{c) Server-Side Complexity Analysis}

The central server aggregates encrypted updates from all clients with a computational complexity of \( \mathcal{O}(\omega C \log C) \). Its communication cost is defined as \( \mathcal{O}(\Phi + CV) \), where \( \Phi \) and \( V \) represent interactions with the Auditor and clients, respectively. The Auditor filters malicious or redundant updates before aggregation, thus reducing the server’s computational load and expediting training convergence. Quantitative gains in efficiency following adversarial user elimination are summarized in Tables~\ref{tab:attack_cost} and~\ref{tab:overhead}. The Auditor's overhead is shown to be minimal, especially in low-threat environments, making it a justifiable trade-off for enhanced system resilience.

\subsection*{d) Cryptographic Cost and Execution Time}

Given the security limitations of legacy key sizes in Paillier-based cryptosystems, this work adopts larger keys conforming to current security standards. The trade-off between computational cost and security is carefully balanced. As key length increases, encryption latency and memory consumption rise, as demonstrated in Table~\ref{tab:keysize_time}. Furthermore, as the number of encrypted gradient elements increases, computational overhead scales accordingly. Thus, the model dynamically tunes key size and batch size to balance privacy with performance, ensuring robust security with feasible latency.





\begin{table*}[h!]
\centering
\caption{Communication Overhead Before and After Adversarial Client Removal}
\label{tab:overhead}
\begin{tabular}{lccc}
\toprule
\textbf{FL Component} & \textbf{Before Adversary Removal} & \textbf{After Adversary Removal} & \textbf{Communication Reduction} \\
\midrule
Client \( c_j \) 
    & \( \mathcal{O}(\omega C + C^2) \) 
    & \( \mathcal{O}(\omega (C - A) + (C - A)^2) \) 
    & \( \frac{A}{C} \left(\frac{C + A}{C} \right) \) \\
Auditor 
    & \( \mathcal{O}(R(\Psi + C)) \) 
    & \( \mathcal{O}(R(\Psi + (C - A))) \) 
    & \( \frac{A}{C} \left(1 - \frac{\Psi}{C} \right) \) \\
Server 
    & \( \mathcal{O}(\Phi + C V) \) 
    & \( \mathcal{O}(\Phi + (C - A) V) \) 
    & \( \frac{A}{C} \left(\frac{C - V}{V} \right) \) \\
\bottomrule
\end{tabular}
\end{table*}

\begin{table*}[htbp]
\centering
\caption{Cryptographic Operation Times Across Key Sizes}
\label{tab:keysize_time}
\begin{tabular}{cccc}
\toprule
\textbf{Key Size (bits)} & \textbf{Encryption Time} & \textbf{Decryption Time} & \textbf{Key Generation Time} \\
\midrule
128  & 0.000114 s (8771 ops/s) & 0.000072 s (13919 ops/s) & 0.001 s \\
256  & 0.000399 s (2508 ops/s) & 0.000171 s (5847 ops/s)  & 0.030 s \\
512  & 0.002450 s (408 ops/s)  & 0.000783 s (1277 ops/s)  & 0.030 s \\
1024 & 0.013911 s (71 ops/s)   & 0.004607 s (217 ops/s)   & 0.200 s \\
2048 & 0.098179 s (10 ops/s)   & 0.028408 s (35 ops/s)    & 0.430 s \\
\bottomrule
\end{tabular}
\end{table*}

\subsection{Scalability to Large-Scale IoT Deployments}

The proposed framework is primarily designed for cross-silo FL environments, where the number of clients is relatively small and each participant possesses sufficient computational resources. Accordingly, our experiments consider 20 clients, all of which participate in every communication round. However, in practical large-scale IoT deployments, where thousands of edge devices may be involved, this full participation model becomes infeasible due to increased communication overhead and computational burden on the central server. In such scenarios, the server must wait for all clients to upload their local models before aggregation. This dependency introduces a bottleneck, especially if some clients are slow (i.e., stragglers) or unable to complete their local training within the prescribed time. To address these challenges, a *client selection mechanism* can be integrated, wherein only a randomly chosen subset of clients participates in each communication round. This approach, widely adopted in FL literature, significantly reduces server load and improves training efficiency. Furthermore, to mitigate the issue of straggling clients, the system can incorporate timeout strategies to exclude delayed responses from the aggregation process. Such client drop strategies prevent indefinite server blocking and maintain the momentum of the learning process. Both client selection and straggler mitigation are active areas of research in scalable federated learning, and extending our framework to support these mechanisms is a promising avenue for future work.

\subsection{Incorporating Recent Advances in Neural Architectures}

Selecting an appropriate neural network architecture is crucial for optimizing the performance of IDS. In this study, we employ a one-dimensional convolutional neural network (1D-CNN) as the local model due to its proven efficacy in modeling time-series and sequential data typical in network traffic analysis. While our current model achieves competitive detection performance, exploring advanced neural architectures could further enhance its predictive capability. Recent studies have demonstrated the effectiveness of hybrid architectures in capturing both spatial and temporal dependencies in sequential data. For instance, the work in \cite{tsokov2022hybrid} proposed a deep spatiotemporal model that integrates two-dimensional CNNs with Long Short-Term Memory (LSTM) networks. In this hybrid model, CNNs extract spatial features while LSTMs model temporal correlations, thereby improving detection performance on sequential datasets. Similarly, the approach in \cite{feng2023tensor} introduced a differentially private tensor-based recurrent neural network tailored for IoT environments. This model not only demonstrated robust detection capabilities but also ensured privacy preservation through the integration of differential privacy mechanisms. Both aforementioned models were deployed in decentralized settings and demonstrate significant potential for adaptation to FL scenarios. Incorporating such architectures into our federated intrusion detection framework could yield enhanced performance and improved privacy guarantees. Future research can focus on the integration and empirical evaluation of these advanced models within the federated IDS paradigm.

\section{Conclusion}
\label{conclusion}
In this study, we proposed SecureDyn-FL, a robust FL-based IDS designed to tackle the challenges of poisoning attacks and non-IID data distributions in IoT environments. Our empirical findings confirm that malicious clients can significantly degrade the performance of federated IDS models, with the impact being more severe under non-IID data settings, a scenario commonly encountered in real-world IoT deployments due to data heterogeneity and imbalance.  To mitigate these challenges, we introduced a personalized FL approach that enables local model customization, allowing each client to adapt to its unique data distribution while still contributing to the global model. This enhances resilience against performance degradation caused by non-IID data. Additionally, we implemented a server-side malicious client detection mechanism that uses anomaly detection to identify and exclude adversarial updates, protecting the global model from poisoning attacks. Extensive experiments on the N-BaIoT dataset confirm the effectiveness of SecureDyn-FL. Evaluated using metrics such as accuracy, F1-score, and detection rate, our approach consistently outperforms state-of-the-art baseline methods under both IID and non-IID conditions. Notably, SecureDyn-FL shows strong robustness against poisoning attacks, including label-flipping and model update poisoning, while maintaining high detection performance. These results highlight its practical value for secure and scalable intrusion detection in heterogeneous IoT networks. 

Future work will explore advanced neural architectures (e.g., graph neural networks and attention mechanisms) to capture IoT data complexities, scalable client selection strategies (e.g., reinforcement learning methods), and enhanced attack detection techniques incorporating differential privacy and secure multi-party computation. In addition, we recognize that blockchain and distributed ledger technologies (DLTs) can provide immutable logging of model updates, decentralized trust management, and verifiable audit trails. Integrating such mechanisms with SecureDyn-FL represents a promising direction to further strengthen the trustworthiness and accountability of federated learning in IoT networks. Collectively, these efforts aim to improve the adaptability, robustness, and transparency of SecureDyn-FL in large-scale federated environments.

\ifCLASSOPTIONcaptionsoff
  \newpage

\fi

\bibliographystyle{IEEEtran} 
\bibliography{References} 

\newpage
\begin{IEEEbiography}[{\includegraphics[width=1 in,height=1.25 in,clip,keepaspectratio]{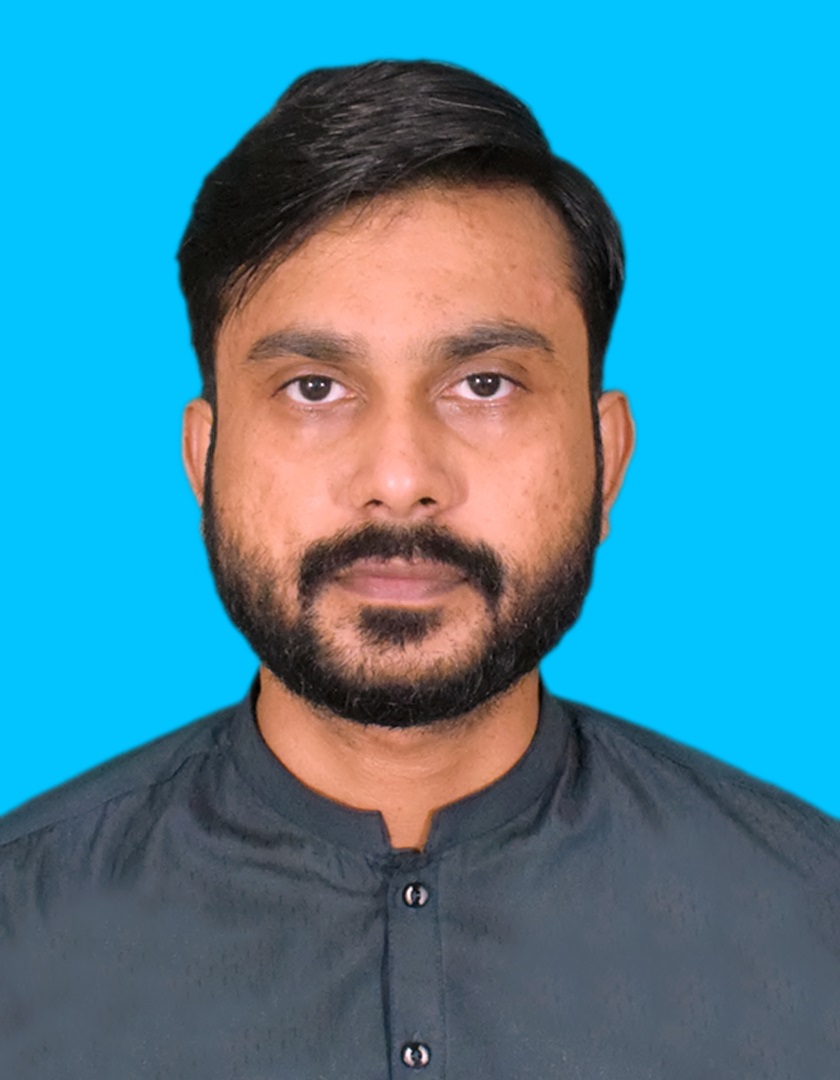}}]{Imtiaz Ali Soomro} is pursuing a PhD in Electrical and Computer Engineering at Sir Syed CASE Institute of Technology, Islamabad, Pakistan. He obtained his B.E. in Telecommunications from Hamdard University, Islamabad, Pakistan 2010. He earned his M.S. in Electrical Engineering with a specialization in Telecom and Networking from COMSATS University, Islamabad, Pakistan, in 2012. His research interests are focused on the application of Federated Learning for IoT, wireless networks, and cybersecurity, particularly on privacy-preserving technologies and secure communication in distributed systems. He is also an IEEE member, actively contributing to the research community through his innovative work in machine learning, IoT, and cybersecurity.
\end{IEEEbiography}
\vspace{11pt}
\begin{IEEEbiography}[{\includegraphics[width=1.05in,height=1.08in,clip]{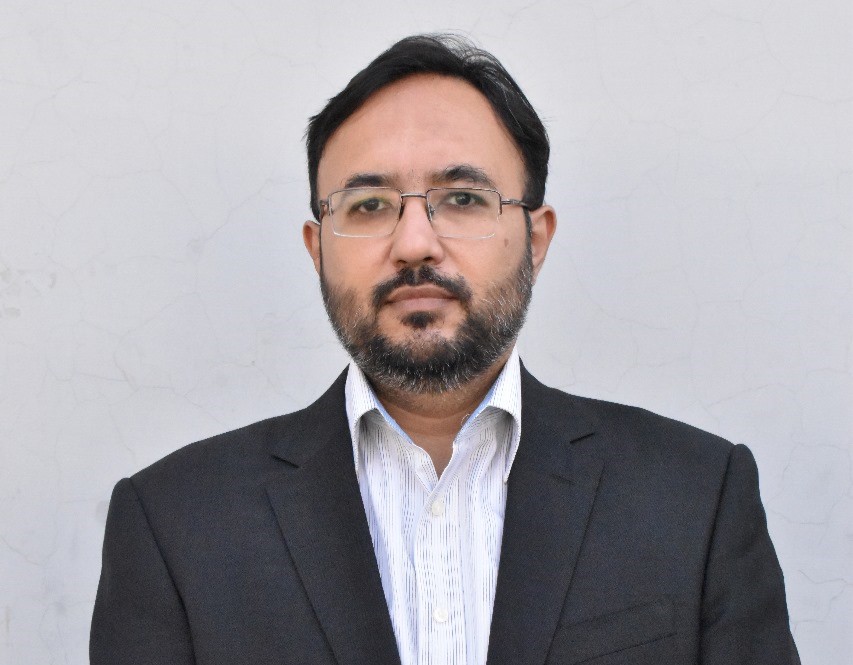}}]{Hamood ur Rehman Khan}
obtained his B.S. degree in electronics engineering from Ghulam Ishaq Khan Institute of Technology in 2000, M.S. degree from the University of Michigan in 2005 and Ph.D. degree from King Fahd University of Petroleum and Minerals in 2019, both in electrical engineering. He was a Senior Member Technical Staff at the Center for Advanced Research and Engineering (CARE), jointly holding appointment as a an Adjunct Professor with the Computer Science Department at Sir Syed-CASE-Institute of Technology. At CARE he has led projects ranging from IoT Platform-as-a-Service systems, cyber-security products, and advanced VLSI based AI platforms for Large Language Model (LLM) inference. Currently, he is an Assistant Professor in the ECE Department of Habib University, teaching various courses pertaining to Electrical and Computer Engineering majors, including, computer architecture, signals and systems, digital communications and statistical inference. In the past, during the period 2000-2003 he was with Avaz Networks Inc, California as a Senior VLSI Design Engineer working on high-density Voice over IP (VoIP) System-on-Chips for gateway media switches. His primary research interests are information theory, signal processing and PHY layer communications for networked systems like WSNs and IoTs.
\end{IEEEbiography}

\vspace{11pt}
\begin{IEEEbiography}[{\includegraphics[width=1in,height=1.25in,clip,keepaspectratio]{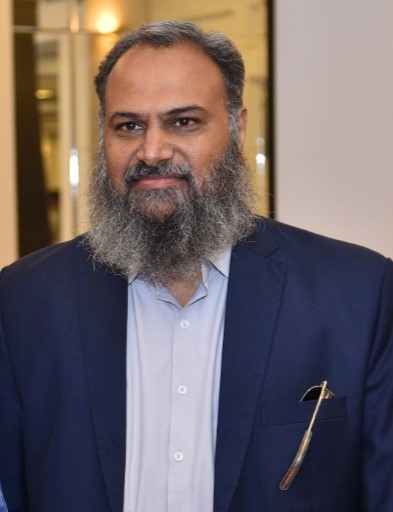}}]{Syed Jawad Hussain}

is an Associate Professor and Chairperson at the Sir Syed Case Institute of Technology, Islamabad, Pakistan. He holds a PhD in Computer Science from Massey University, New Zealand, focusing on developing high-definition video quality experience models. His research interests include Multimedia Communication Networks, Machine Learning, Quality of Service (QoS), Quality of Experience (QoE), Data and Network Security, and Statistical Modeling. Dr. Hussain has extensive experience in academia and industry, having held various leadership roles, including Head of Department positions at institutions in Pakistan and abroad. He has worked on numerous research and consultancy projects, focusing on machine learning, data security, and multimedia communications. Dr. Hussain has published extensively in prestigious journals and conferences, contributing significantly to the field of computer science.
\end{IEEEbiography}

\begin{IEEEbiography}
[{\includegraphics[width=1in,height=1.25in,clip,keepaspectratio]{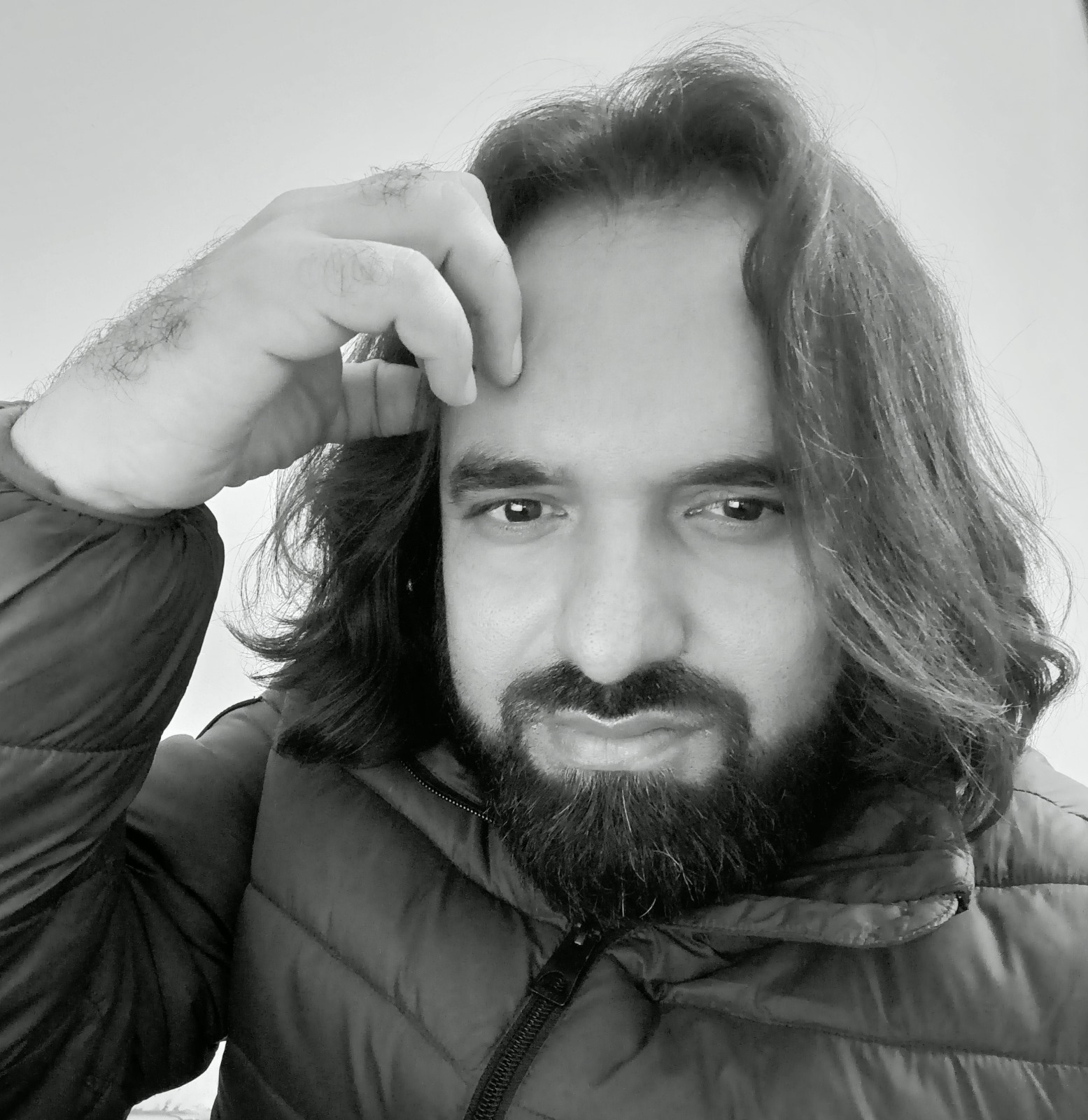}}]
   {Adeel Iqbal} is an Assistant Professor in the School of Computer Science and Engineering at Yeungnam University, South Korea. He specializes in Electrical Engineering. He completed his Bachelor's degree at the FUUAST Islamabad, and later earned his Master's and Ph.D. in Electrical Engineering from COMSATS University Islamabad. Adeel’s research encompasses next-generation cellular networks such as cognitive radio, IoT, D2D, and vehicular systems, along with work in WSNs, machine learning, image processing, and green and renewable energy.

\end{IEEEbiography}

\begin{IEEEbiography}
[{\includegraphics[width=1in,height=1.25in,clip,keepaspectratio]{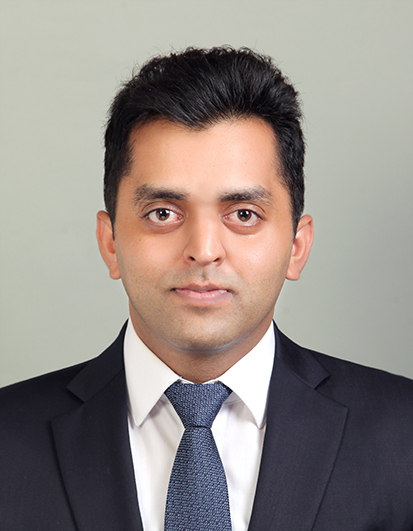}}]
   {Waqas Khalid} (Member, IEEE) received the B.S. degree in electronics engineering from the GIK Institute of Engineering Sciences and Technology, KPK, Pakistan, in 2011, the M.S. degree in information and communication engineering from Inha University, Incheon, South Korea, in 2016, and the Ph.D. degree in information and communication engineering from Yeungnam University, Gyeongsan, South Korea, in 2019. He is currently an Assistant Professor with the Department of Electrical and Electronic Engineering, University of Nottingham Ningbo China (UNNC), Ningbo, China. Previously, he served as a Research Professor at the Institute of Industrial Technology, Korea University, Sejong, South Korea, where he was also the recipient of a National Research Foundation of Korea (NRF) research grant from Jun. 2022 to May 2025. His research interests include physical layer modeling, signal processing, and emerging technologies for 5G/6G networks, including reconfigurable intelligent surfaces, physical-layer security, non-orthogonal multiple access, UAV communications, and the IoTs.
   
\end{IEEEbiography}

\begin{IEEEbiography}
[{\includegraphics[width=1in,height=1.25in,clip,keepaspectratio]{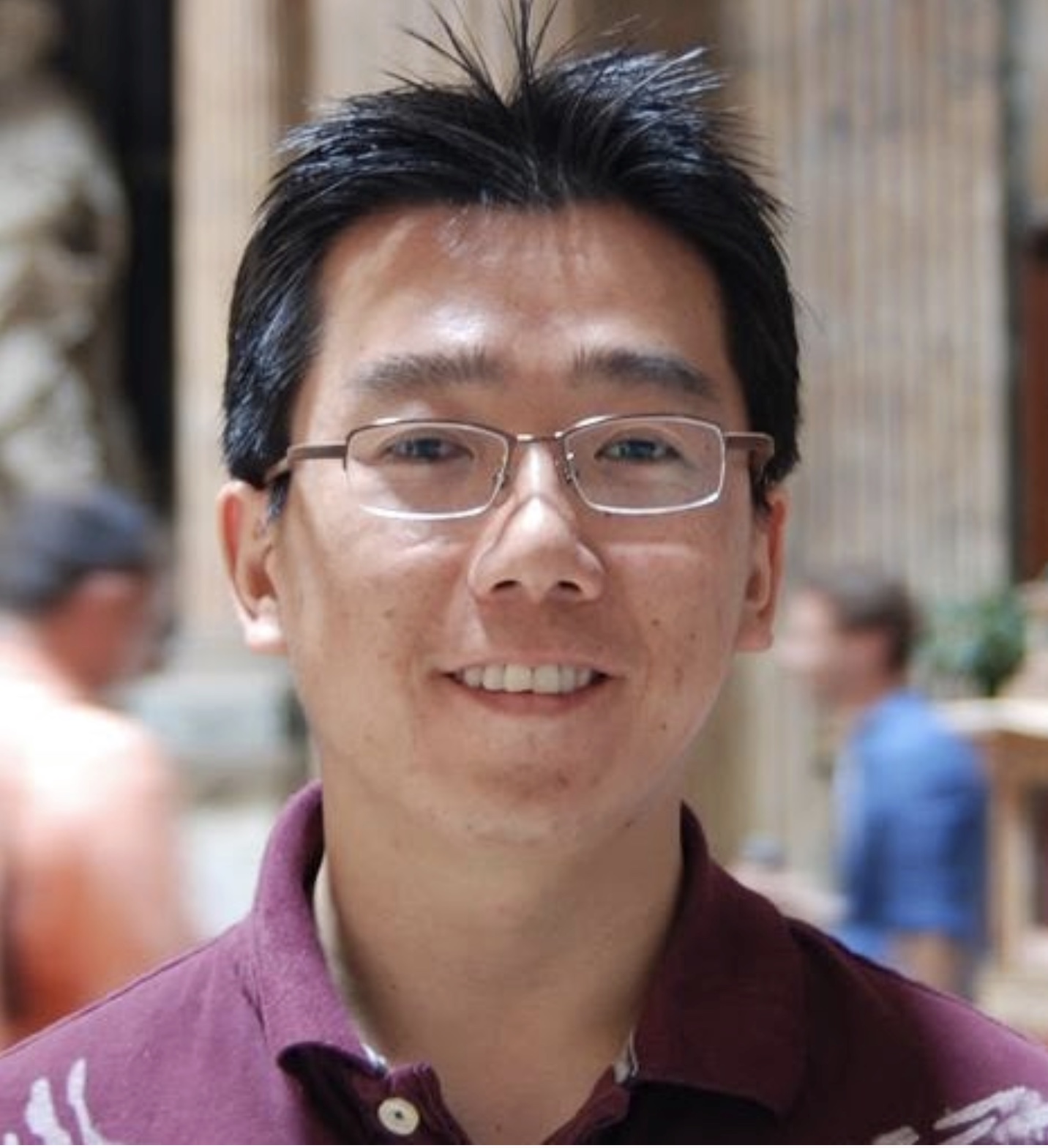}}] {Heejung Yu} (Senior Member, IEEE) received the B.S. degree in radio science and engineering from Korea University, Seoul, South Korea, in 1999, and the M.S. and Ph.D. degrees in electrical engineering from the Korea Advanced Institute of Science and Technology, Daejeon, South Korea, in 2001 and 2011, respectively. From 2001 to 2012, he was with the Electronics and Telecommunications Research Institute, Daejeon, and from 2012 to 2019, he was with Yeungman University, Gyeongsan, South Korea. He is currently a Professor at the Department of Electronics and Information Engineering, Korea University, Sejong, South Korea. His research interests include statistical signal processing and communication theory.
\end{IEEEbiography}

\end{document}